\documentclass[peerreview]{IEEEtran}

\usepackage{graphicx}
\usepackage{amssymb}
\newcommand{\qed}{\fbox{}}
\newcommand{\ve}[1]{ \mbox{\boldmath$#1$} }
\newcommand{\defeq}{\stackrel{\triangle}{=}}

\newtheorem{example}{Example}
\newtheorem{definition}{Definition}

\newtheorem{lemma}{Lemma}

\newtheorem{theorem}{Theorem}

\newcommand{\trace}{{\sf trace}}
\newcommand{\sfvec}{{\sf vec}}
\newcommand{\mod}{\ {\sf mod }\ }

\begin{document}

\title{LP Decodable Permutation Codes
based on Linearly Constrained Permutation Matrices}
\author{Tadashi Wadayama and Manabu Hagiwara
\thanks{T. Wadayama is with 
  Nagoya Institute of Technology, Nagoya City,  Aichi, 466-8555, JAPAN.
    (e-mail:wadayama@nitech.ac.jp).
    M. Hagiwara is with National Institute of
Advanced Industrial Science and Technology,
Central 2, 1-1-1 Umezono, Tsukuba City,
Ibaraki, 305-8568, JAPAN (email: hagiwara.hagiwara@aist.go.jp).
  A part of this work will be presented at International Symposium on Information Theory, 2011.
  The initial version of this work has been  included in e-preprint server arXiv since Nov. 2010 
  (identificator:{\em 	arXiv:1011.6441}).
 } }

\maketitle
\begin{abstract}
A set of linearly constrained permutation matrices are proposed for 
constructing a class of permutation codes. Making use of linear constraints imposed on 
the permutation matrices, we can formulate a minimum Euclidian distance 
decoding problem for the proposed 
class of permutation codes as a linear programming (LP) problem. 
The main feature of this class of permutation codes, called {\em LP decodable permutation codes},
is this LP decodability.
It is demonstrated that the LP decoding performance of the proposed class of permutation codes is characterized by the vertices of the
code polytope of the code. 
Two types of linear constraints are discussed; one is structured constraints and another is random constraints.
The structured constraints  such as pure involution lead to an efficient encoding algorithm.
On the other hand, the random constraints enable us to use probabilistic methods for analyzing several 
code properties such as the average cardinality and the average weight distribution.
\end{abstract}

{\bf Index Terms}: permutation codes,  linear programming,  polytope, decoding, error correction  

\section{Introduction}

The class of linear codes defined over a finite field is ubiquitously employed in digital equipments
for achieving reliable communication and storage systems.
For example, the class of codes includes practically important codes such as 
Reed-Solomon codes, BCH codes, and LDPC codes.
The linearity of codes enables us to use efficient encoding and decoding algorithms based on
their linear algebraic properties.

On the other hand,  there are some classes of nonlinear codes which are 
interesting from both theoretical and practical points of view.
The class of {\em permutation codes}  is such a class of nonlinear codes.

The origin of permutation codes dates back to the 1960s. 
Slepian \cite{Slepian} proposed a class of simple permutation codes, which is referred to as  {\em permutation modulation}, 
and efficient soft decoding algorithms for these codes.  The variant I code \cite{Slepian} is obtained by applying all the permutations  
to the initial vector
\[
(\overbrace{\mu_1, \mu_1  \ldots, \mu_{1}}^{n_1 \mbox{\scriptsize}} \overbrace{\mu_2,  \ldots, \mu_{2}}^{n_2 \mbox{\scriptsize}}
\cdots \overbrace{\mu_k, \mu_k  \ldots, \mu_{k}}^{n_k \mbox{\scriptsize}}),
\]
where $\mu_i$ is a real value and  
$n = n_1+\cdots + n_k$. 
This research has been extended and investigated by a number of researchers.
Biglieri and Elia \cite{Biglieri}, Karlof \cite{Karlof}, Ingemarsson \cite{Ingemarsson}
studied optimization of the initial vector of the permutation modulation. Berger et al. \cite{Berger}
discussed applications of permutation codes to source coding problems.

There is another thread of researches on a class of permutation codes of length $n$ 
whose codeword contains exactly $n$-distinct symbols; 
i.e., any codeword can be 
obtained by applying a permutation to an initial vector, e.g., $(0,1,\ldots, n-1)$.

Some fundamental properties of such permutation codes 
were discussed in Blake et al. \cite{Blake1979}, and Frankl and Deza \cite{Frankl1977}.
Vinck \cite{Han2000} \cite{Han2000-2} proposed applications of permutation codes 
for power-line communication and this triggered subsequent works on 
permutation codes. 
Wadayama and Vinck \cite{Wadayama2001} presented a multi-level construction of
permutation codes with large minimum Hamming distance. 
A number of constructions for permutation codes have been developed,
including the construction given in \cite{Colbourn} \cite{Klove2002}.
Especially,  the idea of a distance-preserving map due to Vinck and Ferreira \cite{Ferreira}
had influence on the study of permutation codes such as subsequent works by
Chang et al. \cite{Chang2003} \cite{Chang2005}.

Recently, rank modulation codes for flash memory proposed by Jiang et al. \cite{Jiang2008-1} \cite{Jiang2008-2} 
generated  renewed interest in permutation codes. 
For example, for flash memory coding, Kl\o ve et al. gave a new construction for permutation codes 
based on Chebyshev Distance \cite{Klove2010}, which is an appropriate distance measure for flash memory coding.
Barg  and Mazumdar \cite{Barg} also studied some fundamental bounds on permutation codes in terms of the Kendall tau distance. 

In order to employ a permutation code in a practical application,  efficient encoding and 
soft-decoding algorithms are crucial to achieve reliable communication over noisy channels, 
such as an AWGN channel. Nonlinearity of permutation codes prevents the use of 
conventional encoding and decoding techniques based on linear algebraic properties. 
Although much works on permutation codes have been conducted,  an aspect of efficient soft-decoding 
has not been intensively discussed so far.
Therefore, there is still room for further researches on permutation codes with efficient encoding and 
soft-decoding algorithms.

In this paper, a new class of permutation codes called {\em LP decodable permutation codes} 
is introduced.  An LP decodable permutation code is 
obtained by applying permutation matrices satisfying certain linear constraints 
to an $n$-dimensional real initial vector. 

It is well known that permutation matrices 
are vertices of the Birkhoff polytope \cite{Birkhoff}, which is the set of doubly stochastic matrices.
Thus, a set of linearly constrained permutation matrices can be expressed by a set of 
linear equalities and linear inequalities.
This property leads to the main feature of this class of permutation codes: {\em LP-decodable property}.
For this class of codes, a decoding  problem can be formulated as a linear programming (LP) problem.  
This means that we can exploit efficient LP
solvers based on simplex methods or interior point methods to decode LP decodable permutation codes. 

Furthermore, for a combination of this class of codes and its LP decoding, 
the maximum likelihood (ML) certificate property can be proved 
as in the case of the LP decoding for LDPC codes \cite{Feldman}.
This is due to the fact that the LP problem given in this paper is a relaxed problem of 
an ML decoding problem.

In general, a fundamental polytope \cite{KV}  \cite{Feldman} 
used for LP decoding of LDPC codes contains a number of fractional 
vertices, which are a major source of sub-optimality of LP decoding. 
The constraints corresponding to an LDPC matrix 
are defined based on $\Bbb F_2$-arithmetics. On the other hand, an LP decoder works on the real number field. This domain mismatch produces  many undesirable fractional vertices on the fundamental polytope. One motivation of the present study is to establish a coding scheme without this mismatch. In other words, the LP decodable permutation codes are defined on the real number field and are decoded using an LP solver working on the real number field.

The organization of the paper is as follows.
Section \ref{preliminaries}  introduces some definitions and notation required for discussion. 
Section \ref{LCP} gives the definition of the LP decodable permutation codes and its decoding algorithm.
Section \ref{decodinganalysis} provides analysis for decoding performance of LP decoding and ML decoding. 
Section \ref{structured} presents some classes of permutation codes which are easy to encode.
Section \ref{random} offers probabilistic analysis on the cardinality and weight distribution of random LP decodable permutation codes.
Section \ref{conclusion} gives a concluding summary.

\section{Preliminaries}

\label{preliminaries}
\subsection{Notation and definition}
In this paper, matrices are represented by capital letters
and a vector is assumed to be a column vector.
Let $X$ be an $n \times n$ real matrix.
The notation $X \ge 0$ means that every element in $X$ is non-negative.
The notation $\sfvec(X)$ represents  a vectorization of $X$  given by
\[
\sfvec(X) \defeq
\left(
X_{1,1}\  \cdots X_{1,n}  \ X_{2,1}  \  \cdots X_{2,n}, X_{3,1} \cdots   X_{n,n}
\right)^T.
\]

The vector $\ve 1$ is the all-one vector whose length is determined by the context.
The norm $||\cdot ||$ denotes the Euclidean norm given by $|| x || \defeq (x^T  x)^{1/2}$.
The trace function $\trace(X)$ returns the sum of the diagonal elements of $X$.
The sets $\Bbb R, \Bbb Z$ are the sets of real numbers and integers, respectively.
The set $[\alpha,\beta]$ denotes the set of consecutive integers from $\alpha \in \Bbb Z$ to $\beta \in \Bbb Z$.

The symbol $\unlhd$ means 
\[
\left(
\begin{array}{c}
a_1 \\
\vdots \\
a_m
\end{array}
\right)
\unlhd
\left(
\begin{array}{c}
b_1 \\
\vdots \\
b_m
\end{array}
\right) \Leftrightarrow
\forall i \in [1,m],  a_i \unlhd_i b_i,
\]
where $\unlhd_i$ is either $=$ or $\le$.
For simplicity, the notation $\unlhd = (\unlhd_1, \unlhd_2, \ldots, \unlhd_m)^T$ is used to define $\unlhd$ 
(e.g.,  $\unlhd = (\le, =, \le)^T$).

The next definition gives a class of matrices of crucial importance in this paper.
\begin{definition}[Permutation matrix]
An $n \times n$ binary real matrix $X \defeq (X_{i,j})_{i,j \in [1,n] } \in \{0,1\}^{n \times n} $
is called a {\em permutation matrix} if and only if 
\begin{equation} \label{permcond}
 \forall i, j \in [1, n], \sum_{j' \in [1,n]} X_{i, j'} = 1, \sum_{i' \in [1,n]} X_{i', j} = 1.
 \end{equation}
 \hfill\fbox{}
\end{definition}
The set of $n \times n$ permutation matrices is denoted by $\Pi_n$.  The cardinality of $\Pi_n$ is $n!$.

Removing the binary constraint from the definition of the permutation matrices, we have
the definition of doubly stochastic matrices.
\begin{definition}[Doubly stochastic matrix]
An $n \times n$ non-negative real matrix $X \defeq (X_{i,j})_{i,j \in [1,n] }$
is called a {\em doubly stochastic matrix } 
if and only if (\ref{permcond}) holds.
 \hfill\fbox{}
\end{definition}

The following theorem for a double stochastic matrix implies that the set of 
doubly stochastic matrices is a convex polytope.
\begin{theorem}[Birkhoff-von Neumann theorem \cite{Birkhoff} \cite{vonNeumann} ]
Every doubly stochastic matrix is a convex combination of permutation matrices.
\end{theorem}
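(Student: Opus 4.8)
The plan is to argue by induction on the number of nonzero entries of the doubly stochastic matrix $X$, peeling off one permutation matrix at each step. The engine of the induction is a combinatorial lemma: every doubly stochastic matrix $X$ admits a permutation $\sigma$ such that $X_{i,\sigma(i)} > 0$ for all $i \in [1,n]$. To establish this, I would form the bipartite graph on rows and columns in which $i$ and $j$ are adjacent precisely when $X_{i,j} > 0$, and invoke Hall's marriage theorem to produce a perfect matching, which is exactly such a $\sigma$. The hypothesis of Hall's theorem is verified by a counting argument: for any subset $S$ of rows, the entries in those rows sum to $|S|$ (each row sums to $1$) and all lie in the columns of the neighborhood $N(S)$, whose total mass is $|N(S)|$; nonnegativity of $X$ then forces $|S| \le |N(S)|$.

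Given such a $\sigma$ with associated permutation matrix $P_\sigma \in \Pi_n$, I would set $c \defeq \min_{i} X_{i,\sigma(i)} > 0$. If $c = 1$ then $X = P_\sigma$ and the claim is immediate. Otherwise $0 < c < 1$, and I would consider the matrix $X' \defeq (1-c)^{-1}(X - c P_\sigma)$. Subtracting $c P_\sigma$ decreases exactly the entries $X_{i,\sigma(i)}$ by $c$, leaving all entries nonnegative and zeroing out at least the position where the minimum $c$ is attained, while rescaling by $(1-c)^{-1}$ restores unit row and column sums. Thus $X'$ is again doubly stochastic but has strictly fewer nonzero entries than $X$.

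The induction then closes as follows. The base case is the matrix with the fewest possible nonzero entries: since each of the $n$ rows contributes at least one positive entry, a doubly stochastic matrix has at least $n$ nonzero entries, and when it has exactly $n$ each row and column has a single entry equal to $1$, so it is a permutation matrix and hence trivially a convex combination of one element of $\Pi_n$. For the inductive step, by hypothesis $X' = \sum_k \lambda_k P_k$ with $P_k \in \Pi_n$, $\lambda_k \ge 0$, and $\sum_k \lambda_k = 1$; substituting back gives $X = c P_\sigma + (1-c)\sum_k \lambda_k P_k$, and the coefficients $c$ and $(1-c)\lambda_k$ are nonnegative and sum to $1$, so this exhibits $X$ as the desired convex combination of permutation matrices.

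The main obstacle is the existence lemma of the first paragraph; everything afterward is bookkeeping on supports and coefficients. Concretely, the delicate point is verifying Hall's condition, i.e., that the row–column incidence structure of a doubly stochastic matrix can never have a deficient set of rows whose positive entries are confined to too few columns. Double stochasticity enters here in an essential and symmetric way—the row-sum normalization supplies the mass $|S|$ and the column-sum normalization caps it at $|N(S)|$—so this is precisely the step where the hypothesis that $X$ is doubly stochastic, rather than merely nonnegative, is used.
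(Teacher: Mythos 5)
Your proof is correct. Note, however, that the paper does not prove this statement at all: it quotes the Birkhoff--von Neumann theorem as a classical result and simply cites \cite{Birkhoff} and \cite{vonNeumann}, so there is no in-paper argument to compare against. What you have written is the standard proof: Hall's marriage theorem applied to the bipartite support graph yields a permutation $\sigma$ with $X_{i,\sigma(i)}>0$ for all $i$ (your verification of Hall's condition, $|S| = \sum_{i\in S}\sum_{j\in N(S)} X_{i,j} \le \sum_{j\in N(S)}\sum_{i=1}^{n} X_{i,j} = |N(S)|$, is exactly where double stochasticity is used), and then induction on the size of the support, peeling off $cP_\sigma$ with $c=\min_i X_{i,\sigma(i)}$ and renormalizing, terminates at the base case of a permutation matrix. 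All the bookkeeping (nonnegativity of $X'$, preservation of row and column sums, strict decrease of the support, and the convexity of the final coefficients $c$ and $(1-c)\lambda_k$) checks out, so the argument is complete modulo the statement of Hall's theorem, which is a reasonable black box to invoke.
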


The set of $n \times n$ doubly stochastic matrices
is a  polytope called the {\em Birkhoff polytope} $B_n$ \cite{Birkhoff}, which is also known as perfect matching polytope.
The Birkhoff polytope is a $(n-1)^2$-dimensional convex polytope with  $n!$-vertices and $n^2$-facets \cite{Ziegler}.
The Birkhoff-von Neumann theorem implies that any vertex (i.e., extreme point) of the Birkhoff polytope is a permutation matrix
and vice versa.

\subsection{LP decoding for permutation vectors}

Assume that 
$
s \in \Bbb R^n
$,
called the {\em initial vector}, is given\footnote{The elements in $s$ are not necessarily distinct each other.}.
The set of images of $s$ by left action of $X \in \Pi_n$ is called the {\em permutation vectors} of
$s$, which is given by
\begin{equation}
\Lambda(s) \defeq  \{X s \mid  X \in \Pi_n\}.
\end{equation}
For example, if $s = (0,1,2)^T$, then $\Lambda(s)$ is given by
\[
\Lambda(s) = \{(0,1,2), (0,2,1),(1,0,2),(1,2,0),  (2,0,1) (2,1,0) \}.
\]

We here consider a situation such that a vector of $\Lambda(s)$ is transmitted to
a receiver over an AWGN channel. In such a case, 
it is desirable to use an ML decoding algorithm to estimate the transmitted vector.
The ML decoding rule can be describe as 
\begin{equation}
\hat {x} = \arg \min_{x \in \Lambda(s)} ||y - x||^2,
\end{equation}
where $y$ is a received word.

The next theorem states that the ML decoding for $\Lambda(s)$ can be formulated as
the following LP problem.

\begin{theorem}[LP decoding and ML certificate property]
Assume that a vector in $\Lambda(s)$ is transmitted over an AWGN channel and 
that $y \in \Bbb R^n$ is received on the receiver side.
We also suppose that 
$\hat {x} =\arg \min_{x \in \Lambda(s)} ||y - x||^2$ is uniquely determined from $y$. 
Let $X^*$ be the solution of the following LP problem:
\begin{eqnarray} \nonumber
\mbox{maximize }  \trace(C^T X) \\ \nonumber
\mbox{subject to } \\ \nonumber
X &\in& \Bbb R^{n \times n} \\ \nonumber
X \ve {1} &=& \ve {1} \\ \nonumber
\ve{1}^T X  &=& \ve {1}^T \\ \label{LP}
X  &\ge& 0, 
\end{eqnarray}
where $C \defeq y s^T$.
If $X^*$ is integral, $\hat {x}=X^* s$ holds.
\end{theorem}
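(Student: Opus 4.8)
The plan is to linearize the minimum Euclidean distance decoding objective and then exploit the vertex structure of the Birkhoff polytope. First I would expand the ML objective as $\|y - Xs\|^2 = \|y\|^2 - 2\, y^T X s + s^T X^T X s$ for $X \in \Pi_n$. The key observation is that every permutation matrix is orthogonal, so $X^T X = I$ and hence $\|Xs\|^2 = \|s\|^2$ is \emph{constant} over $\Pi_n$. This is precisely what turns a quadratic problem into a linear one: minimizing $\|y - Xs\|^2$ over $\Pi_n$ is equivalent to maximizing $y^T X s$ over $\Pi_n$. Using the cyclic invariance of the trace, $y^T X s = \trace(C^T X)$ with $C = y s^T$, since $C^T X = s y^T X$ and $\trace(s\, y^T X) = y^T X s$. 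Thus the ML rule over $\Pi_n$ and the LP objective agree on permutation matrices.

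Second, I would identify the feasible set of the LP. The constraints $X \ge 0$, $X \ve{1} = \ve{1}$, and $\ve{1}^T X = \ve{1}^T$ are exactly the defining conditions of a doubly stochastic matrix, so the feasible region is the Birkhoff polytope $B_n$. Since $\Pi_n \subseteq B_n$, the LP is a relaxation and its optimum is at least the ML optimum. The central step is then that a linear functional over a bounded polytope attains its maximum at a vertex, and by the Birkhoff--von Neumann theorem the vertices of $B_n$ are exactly the permutation matrices. Consequently the maximum of $\trace(C^T X)$ over $B_n$ equals its maximum over $\Pi_n$; that is, the LP optimal value coincides with the ML optimal value.

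Finally, I would close the argument using the integrality hypothesis. An integral doubly stochastic matrix must be a permutation matrix: its entries are non-negative integers and each row and each column sums to one, which forces a single unit entry per row and per column. Hence if $X^*$ is integral it is a permutation matrix attaining the LP optimum, and by the equality of optimal values $X^* s$ attains $\min_{x \in \Lambda(s)} \|y - x\|^2$. Since $\hat{x}$ is assumed to be the unique minimizer, $\hat{x} = X^* s$ follows.

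I expect the main obstacle to be conceptual rather than computational: the trace identity and the orthogonality of permutation matrices are routine, but one must argue carefully that integrality of the returned solution \emph{certifies} ML optimality even though the relaxed LP can in general possess fractional optimal vertices. The vertex-maximization principle together with the uniqueness hypothesis is exactly what supplies this certificate, and without the constant-norm reduction the equivalence between the two optima would break down.
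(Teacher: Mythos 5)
Your proof is correct and follows essentially the same route as the paper: expand the squared distance, use the constant norm of $Xs$ over $\Pi_n$ and the trace identity to reduce ML decoding to maximizing $\trace(C^TX)$ over $\Pi_n$, observe that the LP feasible set is the Birkhoff polytope whose vertices are exactly the permutation matrices, and conclude that an integral optimum of the relaxation is a permutation matrix achieving the ML optimum, so uniqueness gives $\hat{x}=X^*s$. The only cosmetic difference is that you make the vertex-attainment step explicit, whereas the paper phrases the same fact as "an integral solution of the relaxed LP must coincide with the solution of the ILP."
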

\begin{proof}
The linear constraints in the above LP problem implies 
that $X$ is constrained to be a doubly stochastic matrix.

On the other hand, the ML decoding rule can be recast as follows:
\begin{eqnarray} \nonumber
\hat {x}  
&=& \arg \min_{x \in \Lambda(s)} ||y - x||^2  \\ \nonumber
&=& (\arg \min_{X \in \Pi_n} || y - X s||^2)  s \\ \nonumber
&=& (\arg \min_{X \in \Pi_n} (|| y ||^2 - 2 y^T (X  s) + ||X s||^2)) s \\ \nonumber
&=& (\arg \max_{X \in \Pi_n}  y^T X s) s
= (\arg \max_{X \in \Pi_n}  \trace(C^T X))  s,
\end{eqnarray}
where $C = y s^T$. 
Note that 
\begin{equation}
\trace(C^T X) = \sum_{i=1}^n \sum_{j=1}^n C_{i,j} X_{i,j}.
\end{equation}
Since the vertices of the Birkhoff polytope is a permutation matrix,
the ML decoding can be formulated as an integer LP (ILP) problem:
\begin{eqnarray} \nonumber
&&\mbox{maximize }  \trace(C^T X) \\ \nonumber
&&\mbox{subject to } 
X \in B_n,\quad X  \mbox{\   is an integral matrix}.
\end{eqnarray}
By removing the integral constraint ($X$ is an integral matrix), we obtain the LP problem (\ref{LP}).
If the solution of this LP problem is integral, it must coincide with the solution of the above 
ILP problem.
\end{proof}

As we have seen,   the feasible set of the above LP  problem is the Birkhoff polytope.
Thus, an output of the above LP is highly likely integral.

The following example illustrates an LP decoding procedure.
\begin{example}
Let  $s \defeq (0,1)^T$. In this case, 
the set of permutation vectors becomes
$
\Lambda(s) = \{(0,1)^T, (1,0)^T\}.
$
Assume that $y = (0.9, 0.2)^T$ is received. In this case, 
\[
C = y s^T = 
\left(
\begin{array}{c}
0.9 \\
0.2 \\
\end{array}
\right) (0\  1)
= 
\left(
\begin{array}{cc}
0 & 0.9 \\
0 & 0.2 \\
\end{array}
\right)
\]
is obtained. By letting 
\[
X = 
\left(
\begin{array}{cc}
X_{1,1} & X_{1,2} \\
X_{2,1} & X_{2,2} \\
\end{array}
\right),
\]
we have the objective function 
\[
\trace
\left(
\left(
\begin{array}{cc}
0 & 0 \\
0.9 & 0.2 \\
\end{array}
\right)
\left(
\begin{array}{cc}
X_{1,1} & X_{1,2} \\
X_{2,1} & X_{2,2} \\
\end{array}
\right)
\right)= 0.9 X_{1,2} + 0.2 X_{2,2}.
\]
As a result, the LP decoding problem is given by
\begin{eqnarray} \nonumber
&&\mbox{maximize } 0.9 X_{1,2} + 0.2 X_{2,2}
\mbox{ subject to } \\ \nonumber
&& X_{1,1}+ X_{1,2} = 1, \quad X_{2,1}+ X_{2,2}=1,  \\ \nonumber
&& X_{1,1} + X_{2,1} = 1, \quad X_{1,2} + X_{2,2}=1 \\ \nonumber
&& X_{1,1}, X_{1,2}, X_{2,1}, X_{2,2} \ge 0.
\end{eqnarray}
The solution  of the problem is 
\[
X^* = 
\left(
\begin{array}{cc}
0 & 1 \\
1 & 0 \\
\end{array}
\right),
\]
and then we have the estimated word $X^* s = (1,0)^T$. \hfill\qed
\end{example}

\section{Linearly constrained permutation matrices and LP decodable permutation codes}
\label{LCP}
It is natural to consider an extension of 
the LP decoding presented in the previous section.
Additional linear constraints imposed on $\Pi_n$ produce a restricted set of $\Lambda(s)$.
A decoding problem of such a set can be formulated as an LP problem, as in the case of the ML decoding of $\Lambda(s)$.

\subsection{Definitions}

The next definition for linearly constrained permutations gives an LP-decodable subset of $\Lambda(s)$.
\begin{definition}[linearly constrained permutation matrix]
\label{lcpm}
Let $m, n$ be positive integers.
Assume that 
$
A \in \Bbb Z^{m \times n^2},\  b \in \Bbb Z^m
$
and  $\unlhd \in \{=, \le\}^m$ are given.
A set of {\em linearly constrained permutation matrices} is defined by
\begin{equation}
\Pi(A, b, \unlhd ) \defeq \{X \in \Pi_n \mid   A \ \mbox{\sfvec}(X) \unlhd b \}.
\end{equation}
\hfill\qed 
\end{definition}
Note that $A \ \mbox{\sfvec}(X) \unlhd b$ formally represents additional $m$ equalities and inequalities.
These additional constraints provide a restriction on permutation matrices.

From the linearly constrained permutation matrices, LP decodable permutation codes 
are naturally defined as follows.
\begin{definition}[LP decodable permutation code]
Assume the same set up as in  Definition \ref{lcpm}. Suppose also that 
$s \in \Bbb R^n$ is given. The set of vectors $\Lambda(A, b, \unlhd, s)$ given by
\begin{equation}
\Lambda(A, b, \unlhd, s) \defeq \{X s \in \Bbb R^n \mid X \in \Pi(A, b, \unlhd) \}
\end{equation}
is called an LP decodable permutation code.  \hfill \fbox{}
\end{definition}

If  
$
\Rightarrow X^{(1)} s \ne X^{(2)} s
$
holds for any $X^{(1)}, X^{(2)} (X^{(1)} \ne X^{(2)})\in  \Pi(A, b, \unlhd)$,
then an LP decodable permutation code is said to be {\em non-singlar}.
Namely, there is one-to-one correspondence between permutation matrices in $\Pi(A, b, \unlhd)$
and codewords of $\Lambda(A, b, \unlhd, s)$ if a code is non-singular. 
Note that a code may become singular if identical symbols exist  in $s$.

The next example shows a case where an additional linear constraint imposes
a restriction on permutation matrices.
\begin{example}
Consider the set of linearly constrained permutation matrices 
which consists of $4 \times 4$ permutation matrices satisfying 
the linear constraint 
$
\trace(X) = 0.
$
The constraint implies that the diagonal elements of the permutation matrices are constrained to be zero.
This means that such permutation matrices correspond to permutations without fixed points, which are
called {\em derangements}.
For $n=4$, there are 9-derangement permutation matrices as follows:
\begin{eqnarray} \nonumber
\left(
\begin{array}{c}
0100 \\
1000 \\
0001 \\
0010 \\
\end{array}
\right)
\left(
\begin{array}{c}
0100 \\
0010 \\
0001 \\
1000 \\
\end{array}
\right)
\left(
\begin{array}{c}
0100 \\
0001 \\
1000 \\
0010 \\
\end{array}
\right) \\ \nonumber
\left(
\begin{array}{c}
0010 \\
1000 \\
0001 \\
0100 \\
\end{array}
\right)
\left(
\begin{array}{c}
0010 \\
0001 \\
1000 \\
0100 \\
\end{array}
\right)
\left(
\begin{array}{c}
0010 \\
0001 \\
0100 \\
1000 \\
\end{array}
\right) \\ \nonumber
\left(
\begin{array}{c}
0001 \\
1000 \\
0100 \\
0010 \\
\end{array}
\right)
\left(
\begin{array}{c}
0001 \\
0010 \\
1000 \\
0100 \\
\end{array}
\right)
\left(
\begin{array}{c}
0001 \\
0010 \\
0100 \\
1000 \\
\end{array}
\right).
\end{eqnarray}

In this case, the triple $(A,b, \unlhd)$ is defined by 
\begin{equation}
A = \sfvec(I), \quad b = 0,\quad \unlhd = (=),
\end{equation}
where $I$ is the $4 \times 4$ identity matrix.
Multiplying these matrices to the initial vector $s = (0,1,2,3)^T$ from left, we immediately obtain  the members of 
$\Lambda({A}, b, \unlhd, (0,1,2,3)^T ) $:
\begin{equation}
\begin{array}{ccc}
(1,  0,  3,  2)^T, &(1,  2,  3,  0)^T, &(1,  3, 0,  2)^T, \\
(2,  0,  3,  1)^T, &(2,  3,  0,  1)^T, &(2,  3,  1,  0)^T,  \\
(3,  0,  1,  2)^T, &(3,  2,  0,  1)^T, &(3,  2,  1,  0)^T.
\end{array}
\end{equation}
This code is thus non-singular. If the initial vector is 
\[
s=(0,0,0,0)^T,
\]
then the resulting code has the only codeword $(0,0,0,0)$. In this case, the code becomes singular.
\hfill\qed
\end{example}

\subsection{LP decoding for LP decodable permutation codes}

The LP decoding of $\Lambda({ A}, b, \unlhd,s)$ is a natural extension of the LP decoding for $\Lambda(s)$.
Assume that a vector in $\Lambda({ A}, b, \unlhd,s)$ is transmitted over an AWGN channel 
and $y \in \Bbb R^n$ is given. The procedure for the LP decoding of $\Lambda({ A}, b,\unlhd, s)$ is 
given as follows. 

\vspace{1cm}
\noindent {\bf LP decoding for an LP decodable permutation code}
\hrule
\begin{enumerate}
\item Solve the following LP problem and let $X^*$ be the solution.
\begin{eqnarray} \nonumber
\mbox{maximize }  \trace(C^T X) \\ \nonumber
\mbox{subject to} \\ \nonumber
X &\in& \Bbb R^{n \times n}, \\ \nonumber
X  &\ge& 0,  \\ \nonumber
X \ve{1} &=& \ve{1},  \\ \nonumber
\ve{1}^T X  &=& \ve{1}^T, \\  
A \ \mbox\sfvec(X) &\unlhd& b,  \label{LPL}
\end{eqnarray}
where $C = y s^T$.
\item Output $X^* s$ if $X^*$ is integral. Otherwise, declare decoding failure.
\end{enumerate}
\hrule 

\subsection{Remarks}
Several remarks should be made regarding the LP decoding for $\Lambda({A}, b, \unlhd,s)$.

The feasible set of (\ref{LPL}) is a subset of the feasible set of (\ref{LP}).
All the matrices in $\Pi(A, b, \unlhd)$ are feasible and permutation matrices which do
not belong to $\Pi(A, b,\unlhd)$ are infeasible.
This implies that all the integral points of the feasible set (\ref{LPL}) coincide with $\Pi({A },  b,\unlhd)$.

The LP problem (\ref{LPL}) is a relaxed problem of the ML decoding problem over AWGN channels:
\begin{equation} \label{mdrule}
\mbox{minimize } || y - x||^2 \mbox{ subject to } x \in \Lambda(A, b,\unlhd, s).
\end{equation}
This can be easily shown, as in the case (\ref{LP}).
As a consequence of the above properties on integral points and on the relaxation, 
it can be concluded that the LP decoding for $\Lambda(A, b, \unlhd, s)$ has the ML-certificate property as well. Namely,
if the output of LP decoding is not decoding failure (i.e., $X^*$ is integral), the output is exactly 
the same as the solution of the minimum distance decoding problem (\ref{mdrule}).
Note that the LP decoding presented above becomes the ML decoding 
if the code polytope is integral.

The feasible set of the LP problem (\ref{LPL}) is the intersection of the Birkhoff polytope and 
a (possibly unbounded) convex set 
defined by the additional constraints.
The intersection becomes a polytope which is called  a {\em code polytope}.
The decoding performance of LP decoding is closely related to the code polytope given by the following definition.
\begin{definition}[Code polytope]
The polytope $\mathcal{P}(A, b, \unlhd )$ defined by
\begin{equation}
\mathcal{P}(A, b, \unlhd ) \defeq B_n \cap \{X \in \Bbb R^{n \times n} \mid  A \ \mbox{\sfvec}(X) \unlhd b \}
\end{equation}
is called the code polytope for $\Pi(A, b, \unlhd )$, where $B_n$ is 
the Birkhoff polytope corresponding to $\Pi_n$. 
\hfill\fbox{}
\end{definition}

Figure \ref{codepolytope} illustrates a code polytope.
It should be remarked that the set of integral vertices of the code polytope coincides with
$\Pi(A, b, \unlhd)$.
Due to additional linear constraints $A \ \mbox\sfvec(X) \unlhd b$, 
a code polytope may have some fractional vertices, which contain components of fractional number.
\begin{figure}[htbp]
\begin{center}
\includegraphics[scale=0.5]{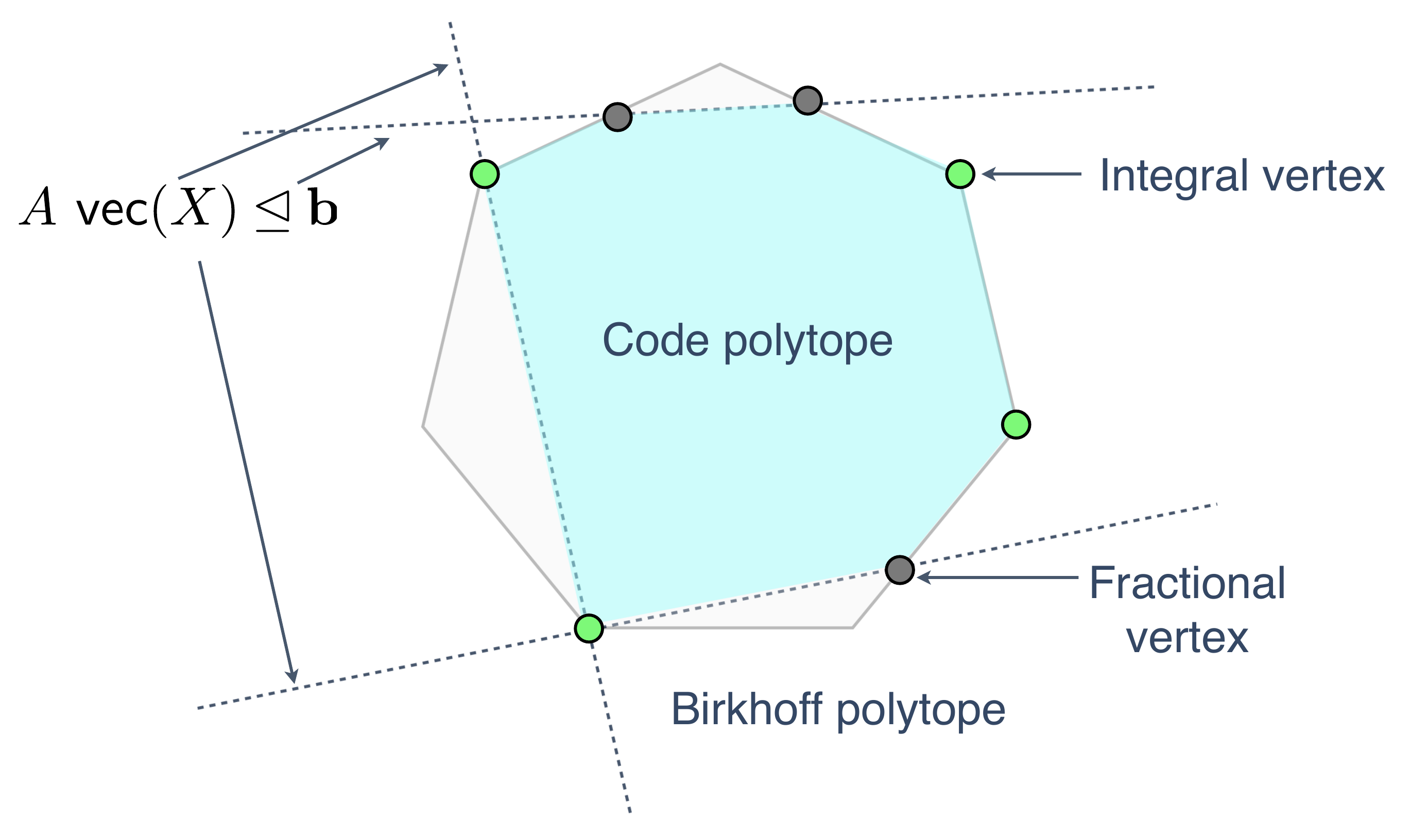}
\caption{Code polytope $\mathcal{P}(A, b, \unlhd )$}
  \label{codepolytope}
\end{center}
\end{figure}

In an LP decoding process,  these fractional vertices become  possible candidates of 
an LP solution. Thus, these fractional vertices can be considered as
{\em pseudo permutation matrices} which degrade the decoding performance of the LP decoding.

\section{Analysis for decoding performance of LP decoding and ML decoding}
\label{decodinganalysis}

In this section, upper bounds on decoding error probability for LP decoding and ML decoding 
are presented.

\subsection{Upper bound on LP decoding error probability}

An advantage of the LP formulation of a decoding algorithm is its simplicity for detailed decoding performance  analysis.
The geometrical properties of a code polytope is closely related to its decoding performance of the LP decoding.
We can evaluate the block error probability of the proposed scheme with reasonable accuracy 
if we have enough information on the set of vertices of a code polytope.
The bound presented in this section has close relationship to the pseudo codeword analysis on 
LDPC codes \cite{Forney}.

In this section, a set of parameters $A, b, \unlhd, s$ are assumed to be given.
Let $V$ be the set of vertices of the code polytope ${\cal P}(A, b, \unlhd,s)$.
In general, $V$ contains fractional vertices.

The next lemma gives bridge between a code polytope and corresponding decoding error probability.
\begin{lemma}[Upper bound on block error rate for LPD] \label{upperlemma}
Assume that a codeword $X s$ is transmitted to a receiver via an AWGN channel, where 
$X \in \Pi(A, b, \unlhd)$.
The additive white Gaussian noise with mean 0 and variance $\sigma^2$ is assumed. 
The receiver uses the LP decoding algorithm presented in the previous section.
In this case, the block error probability $P_{LP}(X)$ is upper bounded by
\begin{equation}
P_{LP}(X) \le \sum_{\tilde X \in V \backslash \{X\} } 
Q\left(\frac{||X s||^2 - (\tilde X s)^T X s }{\sigma ||\tilde X s - X s  ||} \right),
\end{equation}
where the Q-function is the tail probability of the normal Gaussian distribution, which is given by
\begin{equation} 
Q(x) \defeq \int_x^\infty \frac{1}{\sqrt{2 \pi}} \exp \left(- \frac{t^2}{2} \right) dt.
\end{equation}
\vspace{1mm}
\end{lemma}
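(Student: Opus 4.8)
The plan is to recognize the LP decoder as a maximizer of a linear functional over the code polytope, reduce the block error event to a union of pairwise contests between the transmitted vertex and every other vertex, and then evaluate each pairwise error probability exactly using the Gaussianity of the channel noise.

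First I would rewrite the objective. Since $C = y s^T$, we have $\trace(C^T X') = y^T X' s$ for every feasible $X'$, so the LP in (\ref{LPL}) maximizes the linear function $X' \mapsto y^T X' s$ over the code polytope $\mathcal{P}(A, b, \unlhd)$. A linear functional over a polytope attains its maximum at a vertex, so the optimum $X^*$ may be taken in $V$; moreover the transmitted matrix $X$ is an integral vertex and hence lies in $V$. The decoder returns $X s$ exactly when $X$ is the strict maximizer, i.e. when $y^T X s > y^T \tilde X s$ for every $\tilde X \in V \backslash \{X\}$; any competing vertex achieving an objective value at least that of $X$ causes a block error, whether it is another codeword or a fractional pseudo permutation matrix (declared a decoding failure). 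Consequently the error event is contained in $\bigcup_{\tilde X \in V \backslash \{X\}} \{\, y^T (\tilde X - X) s \ge 0 \,\}$, and the union bound gives $P_{LP}(X) \le \sum_{\tilde X \in V \backslash \{X\}} \Pr[\, y^T(\tilde X - X)s \ge 0 \,]$.

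Next I would evaluate each term. Writing the received word as $y = X s + w$ with $w$ a vector of independent $\mathcal{N}(0,\sigma^2)$ components, the pairwise event $y^T(\tilde X - X)s \ge 0$ rearranges to $w^T(\tilde X s - X s) \ge ||X s||^2 - (\tilde X s)^T X s$, using that $(X s)^T(\tilde X s)$ is a scalar equal to $(\tilde X s)^T X s$. The left-hand side is a linear form in $w$, hence a one-dimensional Gaussian with mean $0$ and variance $\sigma^2 ||\tilde X s - X s||^2$. Normalizing by its standard deviation and invoking the definition of the $Q$-function yields the pairwise probability $Q\!\left( (||X s||^2 - (\tilde X s)^T X s)/(\sigma ||\tilde X s - X s||) \right)$, which is precisely the summand in the claimed bound; summing over $\tilde X \in V \backslash \{X\}$ completes the argument.

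The one genuinely delicate point is the reduction in the first step: justifying that it suffices to compare the transmitted vertex against the finite vertex set $V$ rather than against the entire continuum of the polytope. This rests on the fact that the LP optimum is attained at a vertex together with $X \in V$, so the whole-polytope comparison collapses to a finite one. The degenerate possibility of a non-unique optimum (a tie in the objective) is a measure-zero event under the absolutely continuous Gaussian law and therefore does not affect the bound. I expect no difficulty in the Gaussian computation itself, which is the standard pairwise-error-probability calculation; the union bound is exactly what turns the estimate into an upper rather than an exact expression.
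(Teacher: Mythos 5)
Your proposal is correct and follows essentially the same route as the paper: a union bound over the vertices of the code polytope combined with the standard pairwise Gaussian error computation, yielding exactly the summand $Q\bigl((\|Xs\|^2 - (\tilde X s)^T Xs)/(\sigma\|\tilde X s - Xs\|)\bigr)$. If anything, your justification of the reduction to the finite vertex set (the LP optimum is attained at a vertex, and a tie is a measure-zero event) is spelled out more carefully than in the paper, which passes directly from the pairwise error probability to the union bound.
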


\begin{proof}
Let $y = X s + z$, where $z$ is an additive white Gaussian noise term.
We first consider the pairwise block error probability $P_e(X, \tilde X)$ between $X$ and $\tilde X \in \Pi(A, b, \unlhd)$, 
which is given by
\begin{equation}
P_{e}(X, \tilde X) \defeq Prob[y^T \tilde X s \ge y^T X s].
\end{equation}
Namely, $P_{e}(X, \tilde X)$ is the probability such that $\tilde X s$ is more likely than $X s$ for a given $y$
under the assumption that only $\tilde X$ and $X$ are allowable permutation matrices.

The difference $y^T \tilde X s -  y^T X s$ can be transformed into 
\begin{eqnarray} \nonumber
y^T \tilde X s -  y^T X s 
&=& (X s + z)^T (\tilde X s - X s) \\ \nonumber
&=& (\tilde X s - X s)^T z +  (\tilde X s - X s)^T X s  \\ \nonumber
&=& (\tilde X s - X s)^T z  \\ 
&-& ( ||X s||^2 -  (\tilde X s)^T X s ).
\end{eqnarray}
We thus have
\begin{equation}
Prob[y^T \tilde X s \ge y^T X s] = Prob[a^T  z \ge b ],
\end{equation}
where $ a \in \Bbb R^n$ and $b \in \Bbb R$ are given by
\begin{eqnarray}
 a &\defeq& \tilde X  s - X  s, \\
b    &\defeq&  ||X  s||^2  - (\tilde X  s)^T X  s.
\end{eqnarray}

The left-hand side of $ a^T  z \ge  b$ is a linear combination of Gaussian noises.
The mean of $ a^T  z$ is zero  and the variance is given by
\begin{equation}
Var[ a^T  z ] = \sigma^2 || a||^2.
\end{equation}
The probability such that the Gaussian random variable $ a^T  z$ takes a value larger than or equal to $b$ 
can be expressed as 
\begin{eqnarray} \nonumber
P_{e}(X, \tilde X) &=& Prob[ a^T  z \ge b ] \\
&=& Q\left(\frac{b}{\sigma || a||} \right).
\end{eqnarray}
Combining the union bound and this pairwise error probability, we immediately obtain the claim of this lemma.
\end{proof}

The upper bound on decoding error probability in Lemma \ref{upperlemma} naturally leads to 
a pseudo distance measure on $\Bbb R^{n \times n}$. 
\begin{definition}[Pseudo distance]
The function 
\begin{equation}
D_{ s}(X, \tilde X) \defeq \frac{||X  s||^2 - (\tilde X  s)^T X  s }{||\tilde X  s - X s  ||}
\end{equation}
is called the {\em pseudo distance} where $X, \tilde X \in \Bbb R^{n \times n}$ are doubly stochastic matrices.
\hfill\qed
\end{definition}
Note that $D_{ s}(\cdot, \cdot)$ is not a distance function since it does not satisfy the axioms of distance.
In terms of decoding error probability, geometry of the vertices of a code polytope should be established 
based on this pseudo distance.

For example, in high SNR regime, the {\em minimum pseudo distance}
\begin{equation} 
\Delta_{ s} \defeq \min_{X\in  \Pi(A, b, \unlhd ), \tilde X \in V, \tilde X \ne X }D_{ s}(X, \tilde X)
\end{equation}
is expected to be highly influential to the decoding error probability.

\begin{example}
\label{transpos-ex}
Suppose the linear constraint $\trace(X) = 1$ where $n=3$. 
In this case, the code polytope 
has the following 5-vertices:
\begin{eqnarray} \nonumber
M^{(1)} &\defeq&\left(
\begin{array}{ccc}
1 &  0 &  0 \\
0 & 0 & 1 \\
0 & 1 & 0 \\
\end{array}
\right),\ 
M^{(2)}\defeq\left(
\begin{array}{ccc}
0 &  1 &  0 \\
1 & 0 & 0 \\
0 & 0 & 1 \\
\end{array}
\right),  \\ \nonumber
M^{(3)}&\defeq&\left(
\begin{array}{ccc}
0 &  0 &  1 \\
0 & 1 & 0 \\
1 & 0 & 0 \\
\end{array}
\right),\  
M^{(4)}\defeq\left(
\begin{array}{ccc}
1/3 &  0 &  2/3 \\
2/3 & 1/3 & 0 \\
0 & 2/3 & 1/3 \\
\end{array}
\right),  \\
M^{(5)}&\defeq&
\left(
\begin{array}{ccc}
1/3 &  2/3 &  0 \\
0 & 1/3 & 2/3 \\
2/3 & 0 & 1/3 \\
\end{array}
\right).
\end{eqnarray}
In this case, the set of vertices consists of 3-integral vertices and 2-fractional vertices. 
Let $ s = (0, 1,2)^T$. The pseudo distance distribution form $M^{(1)}$ is given by
\begin{eqnarray} \nonumber
D_{ s} (M^{(1)}, M^{(2)}) &=& 1.388730 \\ \nonumber
D_{ s} (M^{(1)}, M^{(3)}) &=& 1.224745 \\ \nonumber
D_{ s} (M^{(1)}, M^{(4)}) &=& 1.224745 \\ \nonumber
D_{ s} (M^{(1)}, M^{(5)}) &=& 1.224745. 
\end{eqnarray}
\hfill\fbox{}
\end{example}

\subsection{Upper bound on ML decoding error probability}

Assume the same setting as in the previous subsection.
In the case of ML decoding, we can neglect the effect of fractional vertices. 
Therefore,  we obtain an upper bound on the ML block error probability
\begin{eqnarray} \nonumber
P_{ML}(X) &\le&
 \sum_{\tilde X \in \Pi(A, b, \unlhd ) \backslash \{X\} } 
Q\left(\frac{||X  s||^2 - (\tilde X  s)^T X  s }{\sigma ||\tilde X  s - X  s ||} \right) \\
&=& \sum_{\tilde X \in \Pi(A, b, \unlhd ) \backslash \{ X \}} 
Q\left(\frac{ ||\tilde X s - X  s  || }{2\sigma} \right)
\end{eqnarray}
based on a similar argument. The above equality holds since 
$
||X  s || = || \tilde X  s||
$
holds for any $\tilde X \in \Pi(A, b, \unlhd )$. Note that 
this simplification cannot apply to $\tilde X$ if  $\tilde X$ is a fractional vertex.
This is because the preservation of Euclidean norm does not hold in general for 
a doubly stochastic matrix. For example, we have
\begin{equation}
\left| \left|\left(
\begin{array}{ccc}
1/3 &  2/3 &  0 \\
0 & 1/3 & 2/3 \\
2/3 & 0 & 1/3 \\
\end{array}
\right) s \right | \right | = 1.9147 \ne ||s|| = \sqrt{5},
\end{equation}
where $s = (0,1,2)^T$.

If $\Pi(A,  b, \unlhd )$ have a group structure under the matrix multiplication, the above upper bound can be further simplified as
\begin{equation} \label{2ndupper}
P_{ML} \le 
\sum_{\tilde X \in \Pi(A, b, \unlhd ) \backslash \{I\}} 
Q\left(\frac{ ||\tilde X s -  s  || }{2\sigma} \right).
\end{equation}
It should be remarked that the second upper bound (\ref{2ndupper}) is independent of the transmitted codeword.
In order to prove the bound (\ref{2ndupper}), it is sufficient to prove $\Pi(A,  b, \unlhd )$ is 
distance invariant with respect to the Euclidean distance.

In the following, the distance invariant property of $\Pi(A,  b, \unlhd )$ will be shown.
Let us define the Euclidean distance enumerator by
\begin{equation}
W_{X}(Z) \defeq \sum_{\tilde X \in \Pi(A, b, \unlhd )} Z^{||X  s - \tilde X  s||}.
\end{equation}
This enumerator has the information on distance distributions measured from the permutation matrix $X$.

The next lemma states that the Euclidean distance enumerator does not depend on the center point $X$
if the linearly constrained permutation matrices have a group structure. This property can be regarded as 
a {\em distance invariance property} of permutation codes.
\begin{lemma}[Distance invariance]
If $\Pi(A,  b, \unlhd )$  forms a group under the matrix multiplication over $\Bbb R$,
the equality 
\begin{equation}
W_{X}(Z) = W(Z)
\end{equation}
holds for any  $X \in \Pi(A,  b, \unlhd )$.
The weight enumerator $W(Z)$ is defined by
$
W(Z) = W_{I}(Z)
$
where $I$ is the $n \times n$ identity matrix.
\end{lemma}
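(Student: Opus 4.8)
The plan is to exploit two facts: every permutation matrix is orthogonal and hence preserves the Euclidean norm, and the assumed group structure of $\Pi(A,b,\unlhd)$ makes left multiplication a bijection of the group onto itself. Write $G \defeq \Pi(A,b,\unlhd)$ and fix $X \in G$. First I would recall that for a permutation matrix $X$ one has $X^T X = I$, so $X$ is orthogonal and $\|Xv\| = \|v\|$ holds for every $v \in \Bbb R^n$.

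The key step is to rewrite the exponent appearing in $W_X(Z)$. For any $\tilde X \in G$ I would factor
\begin{equation} \nonumber
\|X s - \tilde X s\| = \|X(s - X^{-1}\tilde X s)\| = \|s - X^{-1}\tilde X s\|,
\end{equation}
where the first equality uses $X X^{-1} = I$ and the second uses the norm preservation of the orthogonal matrix $X$. This converts a distance measured from the center $X$ into a distance measured from $I$, at the cost of replacing $\tilde X$ by $X^{-1}\tilde X$.

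Next I would invoke the group hypothesis. Since $G$ is a group under matrix multiplication, $X^{-1} \in G$ and the map $\tilde X \mapsto X^{-1}\tilde X$ is a bijection from $G$ onto $G$. Re-indexing the defining sum of $W_X(Z)$ by $\tilde X' \defeq X^{-1}\tilde X$ therefore leaves the summation set unchanged, yielding
\begin{equation} \nonumber
W_X(Z) = \sum_{\tilde X \in G} Z^{\|s - X^{-1}\tilde X s\|} = \sum_{\tilde X' \in G} Z^{\|s - \tilde X' s\|} = W_I(Z) = W(Z).
\end{equation}

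The argument is essentially a one-line change of variables, so I do not expect a serious computational obstacle. The only points demanding care are verifying that the group structure legitimately supplies a bijection of the index set (in particular that $X^{-1}$ lies in $G$, so the reindexing stays inside $\Pi(A,b,\unlhd)$) and that the norm invariance $\|Xv\| = \|v\|$ is precisely the orthogonality of permutation matrices; these are the two ingredients that make the ``recentering'' from $X$ to $I$ possible, and together they deliver the distance invariance.
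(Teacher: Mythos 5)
Your proof is correct and follows essentially the same route as the paper's: both rely on the norm-invariance of permutation matrices (the paper phrases it as $X^{-1}$ inducing a symbol-wise permutation, you phrase it as orthogonality) to recenter the distance at $I$, and then use the group hypothesis to reindex the sum via $\tilde X \mapsto X^{-1}\tilde X$. No gaps.
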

\begin{proof}
Since $\Pi(A,  b, \unlhd)$ forms a group,
the inverse $X^{-1}$ belongs to $\Pi(A,  b, \unlhd)$ as well. 
Since the inverse $X^{-1}$ induces a symbol-wise permutation,
it is evident that 
\begin{equation} \label{distinv}
||X  s - \tilde X  s|| = ||X^{-1} X  s - X^{-1}\tilde X  s|| =|| s - X^{-1}\tilde X  s||
\end{equation}
holds for any $X, \tilde X \in \Pi(A,  b, \unlhd) (X \ne \tilde X)$.
The Euclidean distance enumerator can be rewritten as
\begin{eqnarray} \nonumber
W_{X}(Z) &=& \sum_{\tilde X \in \Pi(A,  b, \unlhd)} Z^{||X  s - \tilde X  s||} \\ \nonumber
&=& \sum_{\tilde X \in \Pi(A, b, \unlhd)} Z^{|| s - X^{-1}\tilde X  s||} \\ 
&=& \sum_{X' \in \Pi(A,  b, \unlhd)} Z^{|| s - X'  s||}  = W(Z).
\end{eqnarray}
The second equality is a consequence of Eq. (\ref{distinv}).
The last equality is due to the assumption that $\Pi(A,  b, \unlhd)$ forms a group.
\end{proof}

\begin{example} \label{snr}
We have performed the following computer experiment for the following two codes:
\begin{enumerate}
\item LP decodable permutation code corresponding to the derangements of length 5. The additional 
linear constraint is $\trace(X)=0$. A transmitted word $(1,0, 4, 2, 3)^T$ is assumed.
The code polytope has 44-vertices which are all integral vertices.
\item LP decodable permutation code of length 5 corresponding to an additional linear constraint $X_{1,1} + X_{5,5} = 1$.
A transmitted word $(0,4,3,2,1)^T$ is assumed. The code polytope has 330-vertices. The set of vertices contains 
36-integral vertices and 294-fractional vertices.
\end{enumerate}
The AWGN channel with 
noise variance $\sigma^2$ is assumed.  The signal-to-noise ratio is defined by
$
SNR = 10 \log_{10}\left(1/{\sigma^2} \right).
$
The LP decoding described in the previous section was employed for decoding.

Figure \ref{BER-LP} presents the upper bounds and simulation results
on block error probability of these permutation code.
It is readily observed that the upper bounds presented in this section shows 
reasonable agreement with the simulation results.

The both codes have the same minimum pseudo distance $0.707107$ and 
similar cardinalities (44 and 36) but the derangement code provides much better block error probabilities than those 
of the code with the constraint $X_{1,1} + X_{5,5} = 1$. This is because the existence of fractional 
vertices (i.e., 294-fractional vertices) severely degrades the decoding performance of the code with the constraint $X_{1,1} + X_{5,5} = 1$
compared with the derangement code.
\hfill\qed
\end{example}

\begin{figure}[htbp]
\begin{center}
\includegraphics[scale=0.9]{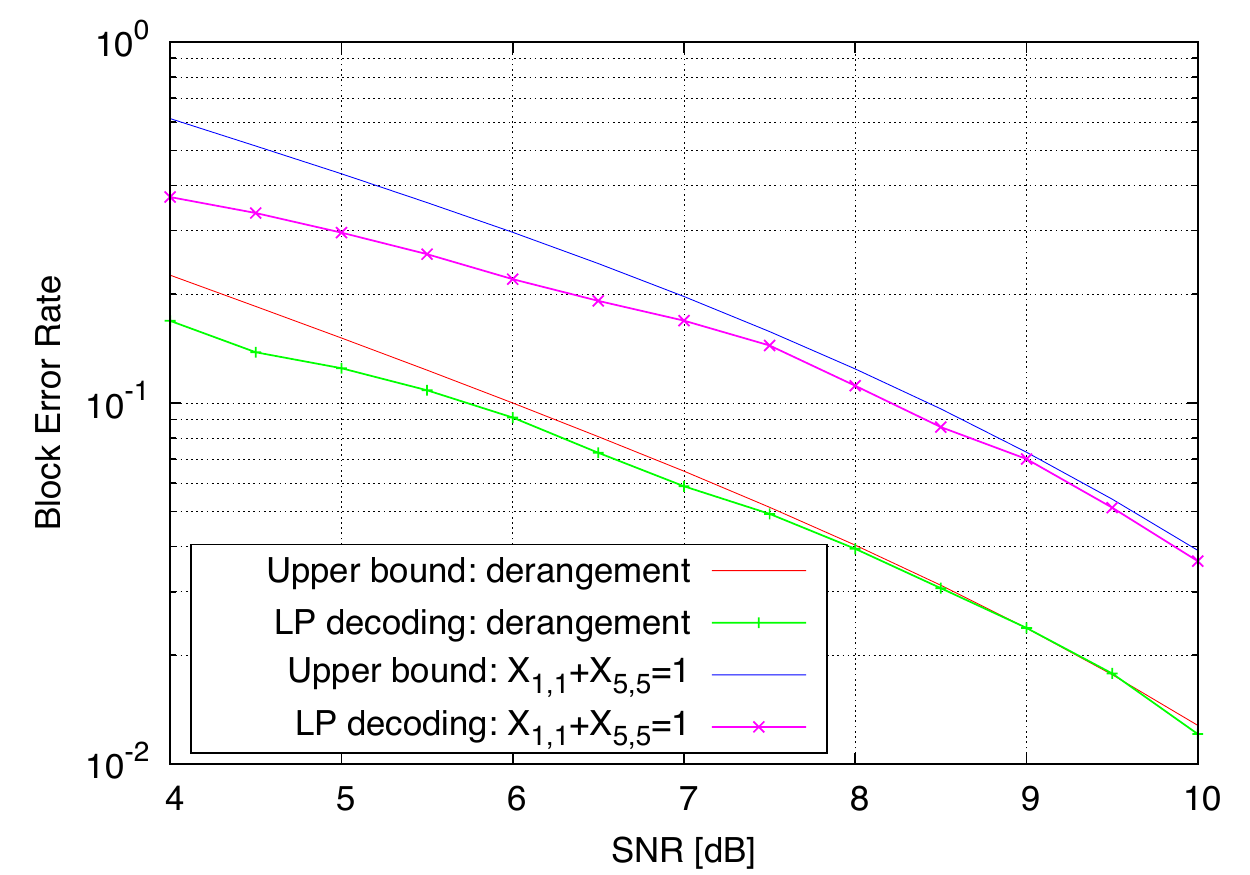}
\caption{Comparison of upper bounds and simulation results for LP decoding on block error probabilities $(n=5)$}
  \label{BER-LP}
\end{center}
\end{figure}

\section{Some classes of linearly constrained permutation codes}
\label{structured}
In this section, we will discuss some classes linearly constrained permutation codes which 
are easy to encode.

\subsection{Repetition permutation codes}

Let $\eta$ be a positive integer.
Assume that a positive integer $n$ is a multiple of $\eta$.
The {\em repetition permutation codes} with repetition order $\eta$ is defined by
\begin{equation}
\{ ((Ys_1)^T, (Y s_2)^T, \ldots, (Y s_\eta)^T )^T \in \Bbb R^n \mid  Y \in \Pi_{n/\eta}  \},
\end{equation}
where $s_1, s_2, \ldots, s_\eta \in \Bbb R^{n/\eta}$. 
We here assume that  all the elements in $s_1, \ldots, s_\eta$ are distinct each other.
It is evident that the 
cardinality of the code is given by $(n/\eta)!$.
The minimum Hamming distance of the code is $2 \eta$ because the minimum Hamming distance of 
$Y s_i$ is 2 for any $i \in [1,\eta]$.

It should be remarked that the repetition permutation code is a linearly constrained permutation 
code. The next example demonstrate linear constraints for the repetition permutation codes.
\begin{example}
Let 
\[
X=
\left(
\begin{array}{cccc}
X_{1,1} & X_{1,2} & X_{1,3} & X_{1,4} \\
X_{2,1} & X_{2,2} & X_{2,3} & X_{2,4} \\
X_{3,1} & X_{3,2} & X_{3,3} & X_{3,4} \\
X_{4,1} & X_{4,2} & X_{4,3} & X_{4,4} \\
\end{array}
\right).
\]
The permutation matrices in $\Pi_4$ satisfying the following 
set of linear constraints
\begin{eqnarray}
X_{1,3} = X_{1,4} = X_{2,3} = X_{2,4} = 0 \\
X_{3,1} = X_{3,2} = X_{4,1} = X_{4,2} = 0 \\
X_{1,1}=X_{3,3},\ X_{1,2} = X_{3,4} \\
X_{2,1}=X_{4,3},\ X_{2,2} = X_{4,4}
\end{eqnarray}
defines the repetition permutation code of length 4 with repetition order 2.
\hfill \fbox{}
\end{example}

\subsection{Cartesian product codes}

Suppose that $\eta$ is a positive number and that $n$ is positive multiple of $\eta$.
A set of permutation matrices $U \subset \Pi_{n/\eta}$ is assumed to be given.
The {\em cartesian product codes} is defined by
\begin{equation}
\{ ((Y_1 s_1)^T, (Y_2 s_2)^T, \ldots, (Y_\eta s_\eta)^T )^T \in \Bbb R^n \mid  Y_1,\ldots, Y_\eta \in U  \},
\end{equation}
where $s_1, s_2, \ldots, s_\eta \in \Bbb R^{n/\eta}$. The cardinality of cartesian product codes
is thus given by $|U|^\eta$ if all the elements in $s_1, \ldots, s_\eta$ are distinct each other.
Note that the class of cartesian product codes can be defined based on a set of linear constraints as well
if $U$ is defined by linear constraints.

\subsection{Pure involution codes}

In this subsection, we focus on the set of pure involutions, which produces a non-trivial class of 
permutation codes.
It will be shown that the class of the permutation codes defined based on the pure involutions 
possess several good properties.  This class of code can be encoded with an efficient greedy encoding algorithm.
The cardinality of the code is much larger than the repetition code with the same length and the same minimum 
Hamming distance. 

An {\em involution} is a permutation which coincides with its inverse permutation.
Namely, the necessary and sufficient condition for a permutation matrix $X\in \Pi_n$ to be an involution
is $X = X^T$ because the inverse matrix of a permutation matrix is the transposition of it.
A {\em pure involution}  is an involution without fixed point;
i.e.,  a permutation matrix $X\in \Pi_n$  is said to be a pure involution 
if and only if $X = X^T$ and $\trace(X) = 0$. In other words, the set of pure involutions is the intersection 
of the set of involutions and the set of derangements. 

A pure involution exists when $n$ is a positive even number. The reason is as follows.
The lower triangle below the diagonal of $X$ and 
the upper triangle above the diagonal must have the same number of ones since $X=X^T$.
This implies that the number of ones in $X$ should be  even since the diagonal is constrained to be zero.
A permutation matrix $X \in \Pi_n$ contains $n$-ones. Thus, if $n$ is odd, it is clear that no permutation matrix meets 
the constraints. Throughout this subsection, we assume that $n$ is an even positive number.

Let 
\[
\Omega_n \defeq \{X  \in \Pi_n \mid X=X^T, \trace(X)=0 \}.
\]
It is known that the cardinality of the pure involutions is given by
\begin{equation}
|\Omega_n|= (n-1) (n-3) \times \cdots \times 3 \times 1=  \frac{n!}{2^{n/2}  (n/2)!}.
\end{equation}

The linearly constrained permutation codes defined based on the constraints 
$X=X^T, \trace(X)=0$ is called the {\em pure involution codes}.
The triple for the pure involution codes are given by
\begin{equation}
A=\left(
\begin{array}{c}
\sfvec(I_n) \\
\sfvec\left(F^{(2,1)} \right) \\
\sfvec\left(F^{(3,1)} \right) \\
\vdots \\
\sfvec\left(F^{(n,n/2-1)} \right) \\
\end{array}
\right),\quad 
b= \ve 0, \quad
\unlhd = (=,\ldots, =)^T,
\end{equation}
where $F^{(i,j)} \in \{0,1\}^{n \times n}$ is the binary matrix defined by
\[
F_{a,b}^{(i,j)} = 
\left\{
\begin{array}{cc}
1, & (a,b)=(i,j) \\
-1, & (a,b)=(j,i) \\
0, & \mbox{otherwise}.
\end{array}
\right.
\]

\subsubsection{Greedy encoding algorithm for pure involutions}

A significant advantage of the pure involutions is that there exists 
an efficient encoding algorithm.  The procedure {\sf EncMap} shown below
can be considered as 
a greedy algorithm for a constraint satisfaction problem without a back-tracking process.

\noindent{\underline{\sf EncMap}}
\begin{enumerate}
\item[*] Input: $m \in [1, (n-1)\times (n-3) \cdots 3\times 1]$ (message)
\item[*] Output:  $X \in \Omega_n$ (pure involution)
\item $m := m-1;$
\item for ($p := 0$; $p < n/2$; $p:=p+1$) \{
\item \quad $a_p := [m \mbox{ mod } (2p+1)]+1;$
\item \quad $m := m  \ {\sf  div}\   (2p+1)$;
\item \}
\item $\forall i,j \in [1,n], X_{i,j} := 0$;
\item $\forall i,j  \in [1,n] (i \ne j), Z_{i,j} := 1$; $\forall i \in [1,n], Z_{i,i} := 0$;
\item for ($p := n/2-1$; $p \ge 0$; $p:=p-1$) \{
\item \quad $j := \arg \min \left\{j' \in [1,n]: \sum_{i'=1}^n Z_{i',j'} > 0  \right\} $;
\item \quad $i := \arg \min \left\{k \in [1,n]: \sum_{i'=1}^k Z_{i',j} = a_{p}  \right\} $;
\item \quad $X_{i,j} := 1$; $X_{j,i} := 1$;
\item \quad $\forall q \in [1,n]$, $Z_{q ,j} := 0$, $Z_{j, q} :=0$, $Z_{i,q} := 0$, $Z_{q, i} :=0$;
\item \}
\item Output $X$;
\end{enumerate}

The arithmetic operation in the line 4 represents the division for integers; i.e., 
$5 \ {\sf div}\ 2 = 2$.
There are some remarks on {\sf EncMap}.
The part from the line 1 to 5 converts a message integer into an $n/2$-tuples:
\[
(a_0, a_1, \ldots, a_{n/2-1}) \in [1,1] \times [1,3] \times \cdots  \times [1, n-1].
\]
The remaining part generates a pure involution according to the $n$-tuple $(a_0, a_1, \ldots, a_{n/2-1})$.

The variables $Z_{i,j}$ represents whether $X_{i,j}$ is determined $(Z_{i,j} = 0)$  or not $(Z_{i,j} = 1)$. 
On the diagonal elements of $Z_{i,j}$ are initialized to be zero which means that the diagonal elements of 
$X_{i,j}$ is determined to be zero.

The generation of a pure involution is performed  in a greedy manner from the left columns to the right columns.
The undetermined column with the smallest index is found in the line 9.
In the line 10,  the row index of $a_p$-th undetermined element is assigned  to $i$.
In the line 11,  two ones are written at $(i,j)$ and $(j,i)$-positions of $X$ and the line 12 fixes 
the cross regions around $(i,j)$ and $(j,i)$.

In an encoding process, for any $p=N-t (t \in [1,N])$,
\begin{equation}
\sum_{i' \in [1,n]}^n Z_{i',j} = (n-1) - 2 (t-1) = 2p+1
\end{equation}
holds at the line 10.
This is because exactly two-columns and two-rows of $Z$ are set to zero for each iteration
due to the constraints of the pure involution.
In other words, the numbers of zero columns and zero rows are increased by two after an 
iteration. This property guarantees that
\begin{equation}\label{injection}
\sum_{i' \in [1,n]}^n Z_{i',j} \ge a_{p}
\end{equation}
holds for all $p \in [0,N-1]$. Therefore, for any input $m$, the line 10 can find 
an index $i$ satisfying 
\[
i= \arg \min \left\{k \in [1,n]: \sum_{i'=1}^k Z_{i',j} = a_{p}  \right\}.
\]

The loop from the line 2 to 5 takes $O(n)$-time
under the assumption that the basic big-number arithmetics can be done within a unit time.
The initialization process (lines 6 and 7) requires $O(n^2)$-time.
The most time consuming part of {\sf EncMap} is the loop from the line 8 to 13.
In order to find $i,j$ in lines 9 and 10, $O(n)$-times requires. The process in line 12 also needs
$O(n)$-time to carry it out.
Therefore, the time complexity of the loop (from the line 8 to 13.) is $O(n^2)$, which dominates the
time complexity of {\sf EncMap}.

From the definition shown above, it is evident that  {\sf EncMap} gives 
a injection map from $[1, (n-1)\times (n-3) \cdots 3\times 1]$ to $\Omega_n$.
Since the cardinality  of  $\Omega_n$ is $(n-1)\times (n-3) \cdots 3\times 1$, 
we can see that  {\sf EncMap} is a bijection.

There is an inverse map of {\sf EncMap} from $\Omega_n$ to $[1, (n-1)\times (n-3) \cdots 3\times 1]$
because {\sf EncMap} is a bijection.   The procedure {\sf DecMap} gives the inverse map of {\sf EncMap}.

\noindent{\underline{\sf DecMap}}
\begin{enumerate}
\item[*] Input: $X \in \Omega_n$ (pure involution)
\item[*] Output: $m \in [1, (n-1)\times (n-3) \cdots 3\times 1]$ (message)
\item $\forall i,j  \in [1,n] (i \ne j), Z_{i,j} := 1$; $\forall i \in [1,n], Z_{i,i} := 0$;
\item for ($p := n/2-1$; $p > 0$; $p:=p-1$) \{
\item \quad $j := \arg \min \left\{j' \in [1,n]: \sum_{i'=1}^n Z_{i',j'} > 0  \right\} $;
\item \quad $i := \sum_{i' \in [1,n]} i' \Bbb I[X_{i',j} = 1]$;
\item \quad $a_p:= \sum_{i'=1}^i Z_{i', j}$;
\item \quad $\forall q \in [1,n]$, $Z_{q ,j} := 0$, $Z_{j, q} :=0$, $Z_{i,q} := 0$,  $Z_{q, i} :=0$;
\item \}
\item $m := 0;$
\item for ($p := n/2-1$; $p \ge 1$; $p:=p-1$) \{
\item \quad $m := (2p+1) m+ (a_p-1)$;
\item \}
\item $m:=m+1$;
\item Output $m$;
\end{enumerate}

\begin{example}
An encoding process of a pure involution matrix is illustrated in Fig.\ref{encodingprocess}.
In this example, $n=6$ is assumed. The status of $X_{i,j}$ and $Z_{i,j}$ are depicted by $6 \times 6$ cells in 
Fig.\ref{encodingprocess}. Namely, $Z_{i,j} = 1$ (undetermined state) represents an empty cell. 
A cell with label 0 (resp. 1) represents $(X_{i,j}, Z_{i,j}) = (0,0)$ (resp. $(X_{i,j}, Z_{i,j}) = (1,0)$).
At first, the diagonal cells are set to be zero because of the constraint $\trace(X)=0$.
The message is assumed to be $m=5$. In this case, we have $a_0=1, a_1=3, a_2=2$.
The shaded cells in Fig.\ref{encodingprocess} (a) represents possible places to write the symbol 1.
According to the part of the message $a_2=2$, the second shaded cell is determined to be 1.
In Fig.\ref{encodingprocess} (b), the symbol 1 is written on the symmetric position and zeros are placed in 
the columns and rows corresponding to two 1's. In a similar way (Fig.\ref{encodingprocess} (b)--(e)),
the empty cells are filled with 0 or 1. As a result, we have a pure involution matrix (Fig.\ref{encodingprocess} (f)).
\end{example}

\begin{figure}[htbp]
\begin{center}
\includegraphics[scale=0.4]{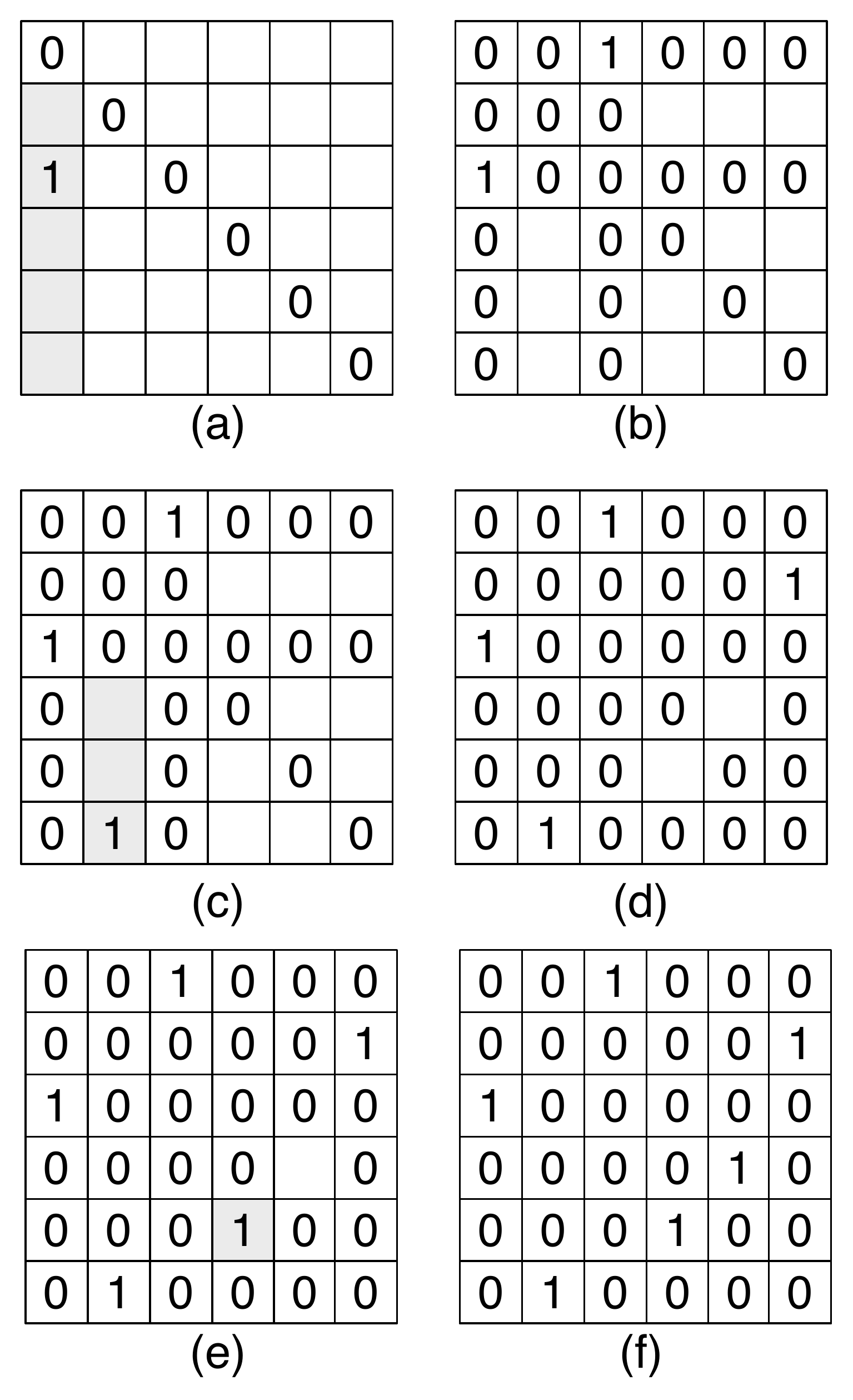}
\flushleft{{\scriptsize The shaded cells represent are possible places to write the symbol 1.
In (a) and (b), there are 5 and 3-shaded cells, respectively. This means that 
$5 \times 3 = 15$ pure involution matrices exist when $n=6$.}}
\caption{An Encoding process of a pure involution matrix}
  \label{encodingprocess}
\end{center}
\end{figure}

\subsubsection{Minimum Hamming distance of pure involution codes}

Let $s \in \Bbb R^n$ be an initial vector whose components are distinct each other.
It is well known that the minimum Hamming distance of $\Lambda(s)$ is given by
\begin{equation}
\min_{X,X' \in \Pi_n (X  \ne X')}d_H(Xs ,X' s) = 2.
\end{equation}

The minimum Hamming distance of the pure involution codes is larger than that of $\Lambda(s)$.
\begin{lemma}[Minimum distance]
The minimum Hamming distance of the pure involution codes are given by
\begin{equation}
\min_{X,X' \in \Omega_n (X  \ne X')}d_H(Xs ,X' s) = 4.
\end{equation}
\end{lemma}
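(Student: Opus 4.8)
The plan is to reduce the Hamming distance between two pure‑involution codewords to a combinatorial count on the underlying permutations, and then to exploit the perfect‑matching structure of fixed‑point‑free involutions. Let $X, X' \in \Omega_n$ correspond to permutations $\pi, \pi'$ (each a product of $n/2$ disjoint transpositions). Because the components of $s$ are distinct, the entries $s_{\pi(i)}$ and $s_{\pi'(i)}$ agree if and only if $\pi(i) = \pi'(i)$, so
\[
d_H(Xs, X's) = |S|, \qquad S \defeq \{ i \in [1,n] : \pi(i) \ne \pi'(i) \}.
\]
Thus the whole claim reduces to showing that $|S| \ge 4$ for every pair of distinct pure involutions, together with exhibiting one pair for which $|S| = 4$.

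For the lower bound I would first observe that $S$ is invariant under both $\pi$ and $\pi'$: if $i \in S$ but $\pi(i) \notin S$, then $\pi'(\pi(i)) = \pi(\pi(i)) = i$, which forces $\pi'(i) = \pi(i)$ (applying $\pi'$, an involution) and contradicts $i \in S$; the same argument works for $\pi'$. Next, I would fix any $i \in S$ and show that the four elements $i,\ \pi(i),\ \pi'(i),\ \pi(\pi'(i))$ are pairwise distinct and all lie in $S$. Distinctness follows from the fixed‑point‑free and involution properties: $\pi(i) \ne i$ and $\pi'(i) \ne i$ since neither permutation has a fixed point; $\pi'(i) \ne \pi(i)$ by definition of $S$; and writing $w \defeq \pi(\pi'(i))$, one checks $w \ne i$ (else $\pi'(i)=\pi(i)$), $w \ne \pi(i)$ (else $\pi'(i)=i$), and $w \ne \pi'(i)$ (else $\pi'(i)$ would be a fixed point of $\pi$). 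Membership in $S$ then comes from the invariance just established. Hence $|S| \ge 4$, equivalently $d_H(Xs, X's) \ge 4$. (Equivalently, one may phrase this via the symmetric difference of the two perfect matchings: $S$ is the vertex support of $M \triangle M'$, a vertex‑disjoint union of alternating cycles, whose shortest possible length is $4$.)

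For equality, take $n \ge 4$ and the pure involutions $\pi = (1\,2)(3\,4)(5\,6)\cdots$ and $\pi' = (1\,3)(2\,4)(5\,6)\cdots$, which coincide except on $\{1,2,3,4\}$; both lie in $\Omega_n$ and give $|S| = 4$, so the bound is attained. The main obstacle is precisely the lower bound — more specifically, ruling out $|S| = 2$ and $|S| = 3$. For arbitrary distinct permutations the minimum disagreement count is only $2$ (a single transposition of values), so the content of the lemma is that being a \emph{fixed‑point‑free involution} rigidly couples the disagreements into alternating $4$‑cycles; the orbit argument above is the cleanest way I see to make that rigidity explicit without case analysis.
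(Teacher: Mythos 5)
Your proof is correct, and it takes a genuinely different route from the paper's. The paper works at the level of the matrices: it considers the \emph{difference position set} $\{(i,j) : X_{i,j}\neq X'_{i,j}\}$, observes that it decomposes into alternating loops of even length (the bipartite-cycle decomposition you mention only parenthetically), notes the shortest loop has length $4$ and touches two columns, and then invokes the symmetry $X=X^T$ to pair each such loop with its transposed copy, so that at least four columns --- hence four codeword coordinates --- are affected. You instead work directly with the permutations: you show the disagreement set $S=\{i:\pi(i)\neq\pi'(i)\}$ is invariant under both involutions and that any $i\in S$ generates four distinct elements $i,\pi(i),\pi'(i),\pi(\pi'(i))$ of $S$. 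Your argument is arguably tighter on the lower bound: it rules out $|S|=2$ and $|S|=3$ in one stroke (the paper's loop argument is somewhat informal about excluding degenerate configurations, e.g.\ loops sharing columns with their transposes, though these are in fact impossible because the diagonal is forced to zero), and you supply an explicit attaining pair, whereas the paper only describes the minimal two-loop configuration. What the paper's viewpoint buys in exchange is a handle on \emph{enumeration}: the loop/transpose-pair description is what the authors use immediately afterwards to count the number of codewords at distance $4$ from a fixed codeword, which your orbit argument does not directly provide.
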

\begin{proof}
Assume that $X,X' \in \Omega_n (X  \ne X')$.
Since $X \ne X'$, there is an index pair $(i,j) \in [1,n]^2$ 
satisfying $X_{i,j} \ne X'_{i,j}$.
Without loss of generality, we assume that $X_{i,j} =1$ and $X'_{i,j}=0$.
An index $l \in [1,n]$ satisfying $X_{i,l} \ne X'_{i,l}$ must exist 
because $X$ and $X'$ are permutation matrices. Due to the assumption
$X_{i,j} =1$ and $X'_{i,j}=0$, we have $X_{i,l} = 0$ and $X'_{i,l} = 1$.
In a similar manner, there must be an index $k$ satisfying $X_{k,j}=0, X'_{k,j}=1$.
It is possible to continue this argument until a sequence of index pairs constitutes a loop.

The set of the index pairs
$
\{(i,j) \in [1,n]^2 \mid X_{i,j} \ne X'_{i,j} \}
$
is called a {\em difference position set}. The argument above implies that 
the difference position set needs to be partitioned into several loops of even length. A loop 
means a sequence of adjacent index pairs with the form 
$(i_1,i_2)  \rightarrow (i_1,i_3) \rightarrow (i_4,i_3)  \rightarrow \cdots  \rightarrow (i_1,i_2)$.
If  $X_{i_1,i_2} = 1$ holds, then we have $X_{i_1,i_3} = 0, X_{i_4,i_3} = 1$ and so on.
Therefore, the length of a loop should be even because a loop with odd length gives inconsistent assignment 
$X_{i_1,i_2} = 0$ at the end of the loop.

The shortest loop of even length have the form  $(i,j) \rightarrow (i,l) \rightarrow (k,l) \rightarrow (k,j) \rightarrow (i,j)$.
If the difference potion set includes this type of a loop of length 4, it must also contain another loop of length 4 
with the form $(j,i) \rightarrow (l,i) \rightarrow (l,k) \rightarrow (j,k) \rightarrow (j,i)$ because 
$X = X^T$ holds for any $X, X' \in \Omega_n$ (See Fig.\ref{twoloops}). 
Let $a=X s$ and $a'=X' s$. If the difference position sets consist of only such two symmetric loops of length 4,
we have
\[
a_u \ne a'_u\quad \mbox{iff } u \in \{i, j, k, l \}.
\]
This implies that the smallest number of differences between 
$X s$ and $X' s$ is 4. 
\end{proof}
The proof of the above lemma indicates a way to enumerate the number of codewords at the minimum Hamming distance.
For a fixed $X s$, the number of codewords $X' s$ satisfying 
$
d_{H}(Xs , X' s) = 4
$
can be obtained by enumerating the number of allocations of two symmetric loops.

\begin{figure}[htbp]
\begin{center}
\includegraphics[scale=0.5]{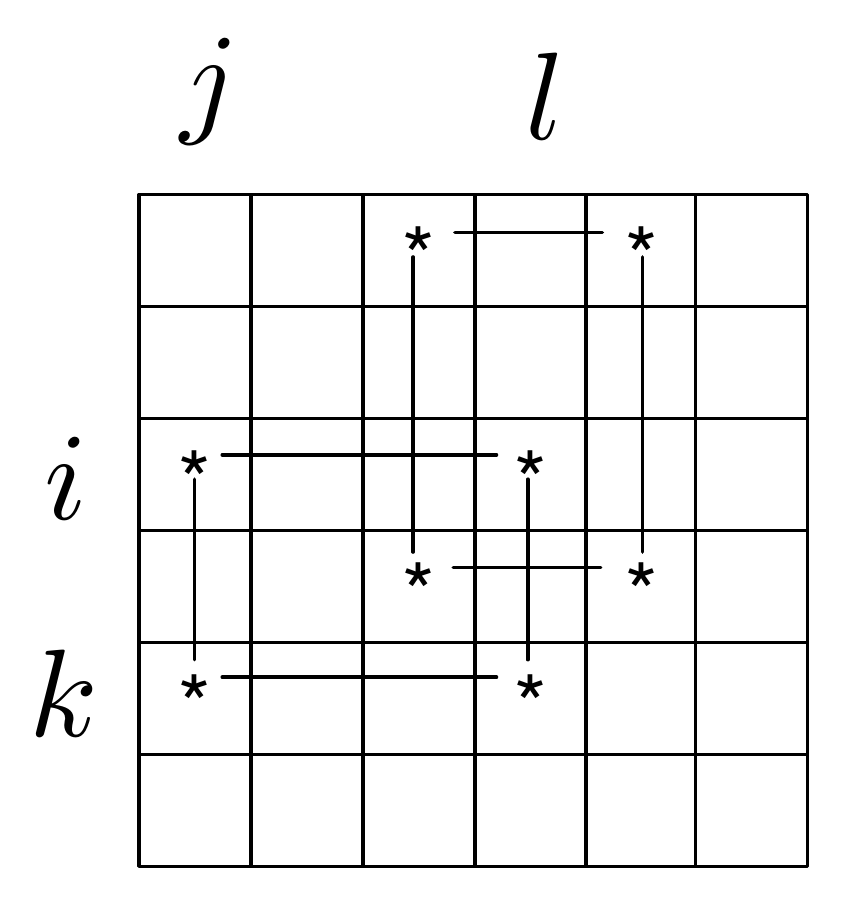}
\flushleft{{\scriptsize The left loop of length 4 represents
$(i,j) \rightarrow (i,l) \rightarrow (k,l) \rightarrow (k,j) \rightarrow (i,j)$ and 
the right loop corresponds to 
$(j,i) \rightarrow (l,i) \rightarrow (l,k) \rightarrow (j,k) \rightarrow (j,i)$.
Note that there are 4-columns which include elements of the difference position set.
These columns correspond to the positions on which the symbols of $X s$ and $X' s$ differ.
}}
\caption{Two symmetric loops of length 4 in a difference position set.}
  \label{twoloops}
\end{center}
\end{figure}

We have seen that the repetition code of repetition order 2 
yields the minimum Hamming distance 4. When the length of the code is $n$ (even), 
the number of codewords is given by $(n/2)!$. On the other hand,  the pure involution code 
provides the same minimum Hamming distance and the cardinality of the code is given by
${n!}/({2^{n/2}  (n/2)!})$, which is much larger than $(n/2)!$ because 
\begin{eqnarray}
\frac{{n!}/({2^{n/2}  (n/2)!})}{(n/2)!} &=& {n \choose n/2} 2^{-n/2} \simeq \frac{1}{\sqrt{\pi n/2}}2^{n/2}. 
\end{eqnarray}

For example, consider the case where $n=64$.
In this case, the number of codewords of the repetition code is  $(n/2)! \simeq 2^{118}$.
On the other hand, the pure involution code have 
\[
\frac{n!}{2^{n/2}  (n/2)!} \simeq 2^{146}
\]
codewords which is approximately $2^{28}$-times larger than that of the repetition code.

\subsubsection{Code polytope of pure involutions}
The linear constraint $X=X^T$ and $\trace(X)=0$ for pure involutions defines a 
code polytope which is not an integral polytope. 
\begin{example}
Assume that $n=6$. The code polytope defined based on the constraints 
$X=X^T$ and $\trace(X)=0$ have 15 integral vertices and 10 fractional vertices.
A fractional vertex is 
\[
\left(
\begin{array}{cccccc}
0 &1/2 &0 & 0 & 0 & 1/2 \\
1/2 & 0 & 0 & 0  &0 & 1/2  \\
0 &0 & 0& 1/2 &1/2 & 0 \\
0 & 0& 1/2& 0 &1/2&  0  \\
0 &0 &1/2 &1/2 &0 &0  \\
1/2 &1/2& 0 &0& 0& 0  \\
\end{array}
\right).
\]
\end{example}
Deriving inequality description of the convex hull of pure involution matrices 
is an interesting open problem.

\subsubsection{Simulation results}
The minimum Hamming distance of a permutation code is a universal measure 
for goodness of a code because it does not depend on the choice of the initial vector $s$.
However, as we have seen in the previous section,  decoding performance 
is mostly determined by the pseudo distance distribution of a code polytope.

In order to evaluate the decoding performance of pure involution codes, 
we have performed a computer experiment.
Figure \ref{pureinvsim} presents the block error probability of the pure involution codes with 
length 64. In this experiment, the initial vector is assumed to be $s=(1,2,\ldots, 64)$ and
the LP  decoding was used. The definition of 
the SNR is the same as in Example \ref{snr}. For comparison purpose, the block error probabilities of 
the repetition permutation code of length 64 with the repetition order 2 and uncoded 
permutations vectors (i.e., $\Lambda(s)$) of length 64  are also plotted in Fig. \ref{pureinvsim}.
It can be observed that the pure involution code gives much small block probabilities compared with 
the repetition code.  As we have seen  in the previous section, the cardinality of a pure involution code 
is much larger than that of the repetition code.  We may be able to conclude that the pure involution code
is superior to the repetition code.
\begin{figure}[htbp]
\begin{center}
\includegraphics[scale=0.9]{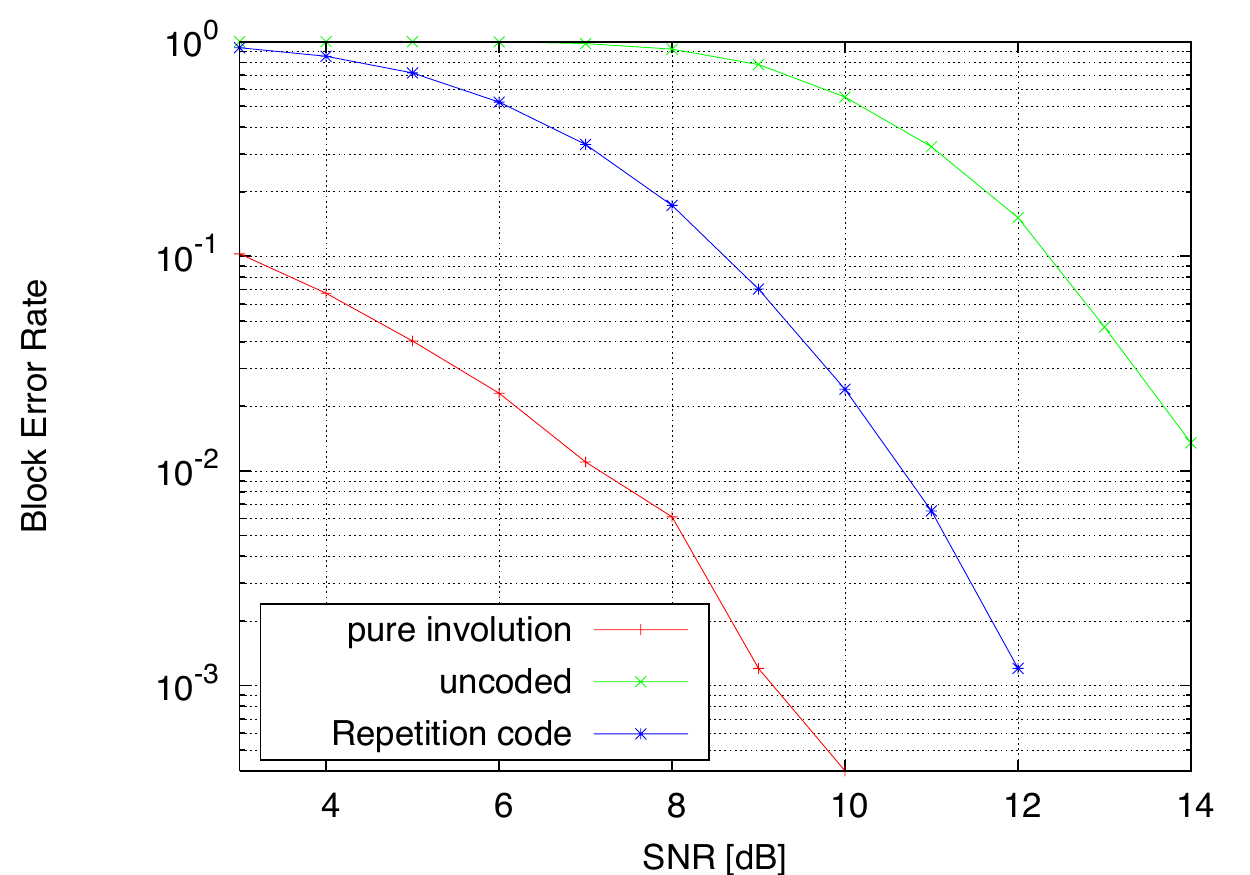}
\caption{Comparison of block error probabilities: pure involution codes, repetition permutation codes, and
uncoded permutation vectors of length 64}
  \label{pureinvsim}
\end{center}
\end{figure}

\subsection{Block permutation codes}
A block permutation codes are defined based on the block permutation matrices.
The block structure is useful for encoding and evaluation of the minimum squared Euclidean distance.

\subsubsection{Definitions}
Suppose the situation where the set $[1, n] \times [1,n]$ is divided into 
mutually disjoint $\gamma \times \gamma$ square blocks of size $\nu \times \nu$ (i.e., $n=\gamma \nu$ holds).
The square blocks are called {\em blocks} which is explicitly defined as follows.
\begin{definition}[Block]
For $k, b \in [1 , \gamma]$, a {\em block} $B_{k,b}$  is defined by
\begin{equation}
 B_{k, b} \defeq \{ (i, j) \in [1,n]^2 \mid  \nu(k-1) < i \le \nu k, \nu(b-1) < j \le \nu b \}.
 \end{equation}
The indices $k$ and $b$ are called {\em block indices}.
\hfill\fbox{}
\end{definition}

The rectangle region $T^{(l)}_{k, b}$ is defined as 
\begin{equation}
T^{(l)}_{k,b} \defeq \{ (x, y) \in B_{k,b} \mid  y = \nu(b-1) + l \}
\end{equation}
for $k, b \in [1, \gamma]$ and $l \in [1, \nu]$.
The subscript $k,b$ specifies the block where the rectangle region $T^{(l)}_{k, b}$ belongs to.
The superscript  $l \in [1, \nu]$, which is called a {\em subindex}, indicates the relative position 
in the block  $B_{k, b}$. 

We are now ready to define a block permutation matrix which is the basis for realizing a block-wise permutation group.
\begin{definition}[Block permutation matrix]
Assume that a permutation matrix $X \in  \Pi_n$ is given.
If, for any $b \in [1,\gamma]$, there exists the unique block index $k$
satisfying 
\begin{equation}
X(B_{k, b}) \neq 0
\end{equation}
then $X$ is called a {\em block permutation matrix}.
The notation $X(B_{k, b})$ represents the sub-matrix of $X$ corresponding to the block $B_{k, b}$. \hfill\fbox{}
\end{definition}

From this definition, it is apparent that 
a nonzero $X(B_{k, b}) \in \{0,1\}^{\nu \times \nu}$ is a permutation matrix if $X$ is a block permutation matrix.
Furthermore, there exists the unique block index $b$ satisfying $X(B_{k, b}) \neq 0$
for any block index $k \in [1,\gamma]$. This equivalent statement can be obtained 
by exchanging the role of column and row in the above definition.

For block indices $k, b \in [1, \gamma]$ and subindex $ l \in [1, \nu]$,  the {\em skewed column set} is defined by 
\begin{equation}
 U_{k,b}^{(l)} \defeq 
 T_{k,b}^{(l)} \cup \left( \bigcup_{k' \in [1,\gamma] \backslash \{k\}} T_{k' , b}^{(l \mod \nu) + 1  }   \right).
\end{equation}

Figure \ref{blocks} illustrates the subsets of $[1,n] \times [1,n]$ appeared so far 
such as the blocks, the rectangle regions, and the skewed column set.
\begin{figure}[htbp]
\begin{center}
\includegraphics[scale=0.4]{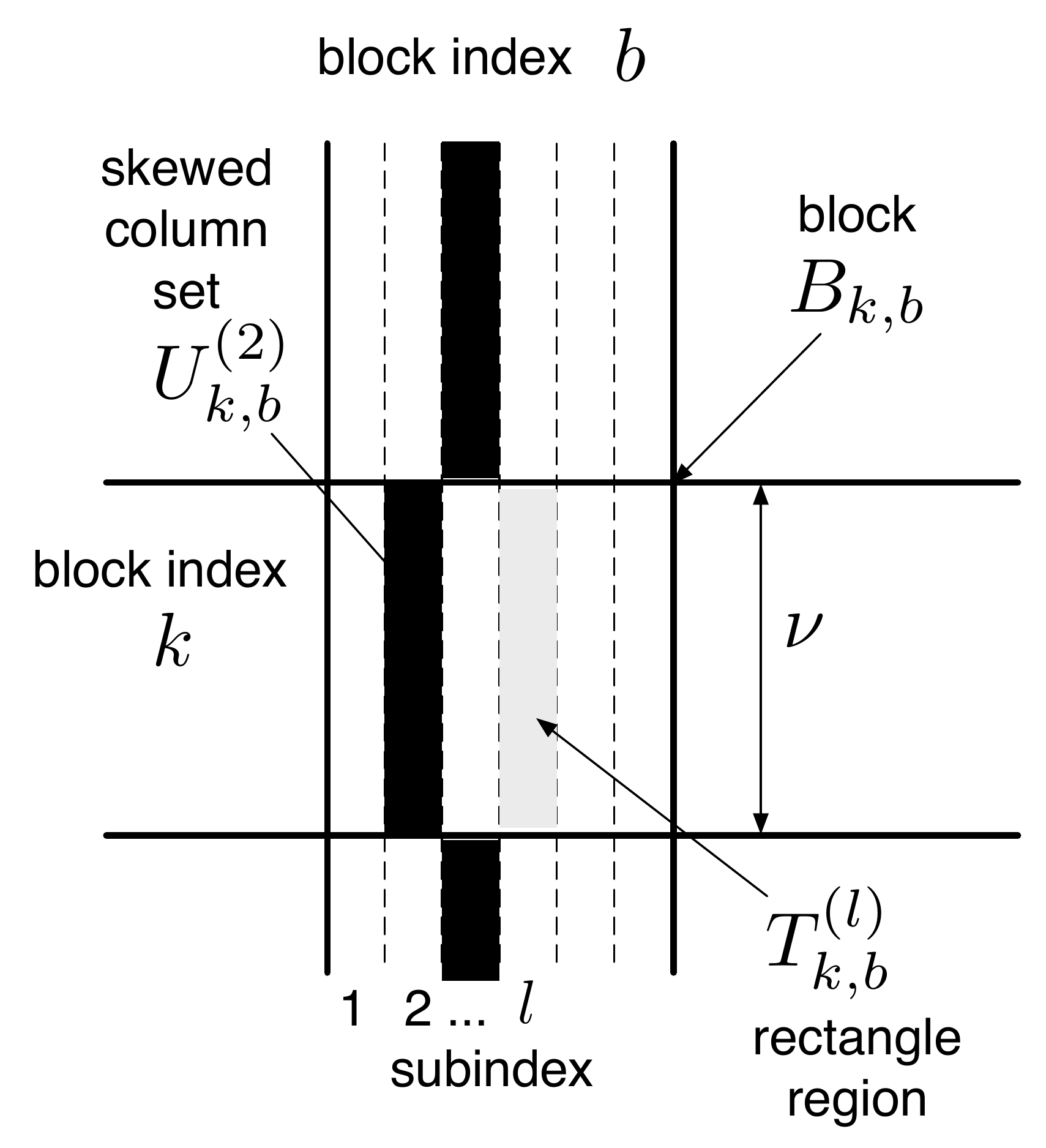}
\caption{Blocks, rectangle regions and skewed column set}
  \label{blocks}
\end{center}
\end{figure}

\subsubsection{Block permutation codes}
The next theorem presents a set of linear constraints characterizing block permutation matrices.
\begin{theorem}[Characterization of block permutation matrix]
\label{characterization}
Let $X \in \Pi_n$ be a permutation matrix.
The permutation matirx $X$ is a block permutation matrix if and only if 
\begin{equation} \label{hagiwara}
\sum_{(u,v) \in U_{k,b}^{(l)}} X_{u,v} = 1
\end{equation}
holds for any $b,k\in [1, \gamma], l \in [1, \nu]$.
\end{theorem}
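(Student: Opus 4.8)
The plan is to reduce the combinatorial condition (\ref{hagiwara}) to a simple recurrence on ``column occupancy numbers'' and then exploit the permutation structure. Fix a column block $b$. For each row block $k \in [1,\gamma]$ and subindex $l \in [1,\nu]$, let $n_k^{(l)} \defeq \sum_{(u,v) \in T_{k,b}^{(l)}} X_{u,v}$ count the ones of $X$ lying in the $l$-th column of the block $B_{k,b}$. Because $X$ is a permutation matrix, the single matrix column indexed $\nu(b-1)+l$ contains exactly one $1$, and this matrix column is partitioned by the row blocks into $T_{1,b}^{(l)}, \ldots, T_{\gamma,b}^{(l)}$; hence $\sum_{k=1}^{\gamma} n_k^{(l)} = 1$ and each $n_k^{(l)} \in \{0,1\}$. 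This elementary identity is the engine of the whole argument.

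First I would rewrite the skewed-column sum. Writing $l' \defeq (l \mod \nu)+1$, the set $U_{k,b}^{(l)}$ is the disjoint union of $T_{k,b}^{(l)}$ with the regions $T_{k',b}^{(l')}$ for $k' \ne k$ (these are disjoint because they lie in distinct row blocks). Therefore
\[
\sum_{(u,v) \in U_{k,b}^{(l)}} X_{u,v} = n_k^{(l)} + \sum_{k' \ne k} n_{k'}^{(l')} = n_k^{(l)} + \left(1 - n_k^{(l')}\right),
\]
using the column identity above. Consequently, condition (\ref{hagiwara}) holds for the triple $(k,b,l)$ if and only if $n_k^{(l)} = n_k^{(l')}$.

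For the forward implication I would take $X$ to be a block permutation matrix, let $k_0$ be the unique row block with $X(B_{k_0,b}) \ne 0$, and note that $X(B_{k_0,b})$ is then a $\nu \times \nu$ permutation matrix, so $n_{k_0}^{(l)} = 1$ for every $l$ while $n_k^{(l)} = 0$ for $k \ne k_0$; in either case $n_k^{(l)} = n_k^{(l')}$, so (\ref{hagiwara}) holds. For the converse, assume (\ref{hagiwara}); by the equivalence just established, $n_k^{(l)} = n_k^{(l')}$ for all $k,l$. Here the key is that the map $l \mapsto (l \mod \nu)+1$ is precisely the $\nu$-cycle $1 \to 2 \to \cdots \to \nu \to 1$ on $[1,\nu]$, so iterating the equality forces $n_k^{(1)} = n_k^{(2)} = \cdots = n_k^{(\nu)}$; call this common value $n_k \in \{0,1\}$. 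Summing the column identity over $k$ then gives $\sum_{k} n_k = 1$, so exactly one row block $k_0$ has $n_{k_0} = 1$ and all others vanish. Thus every column of $X(B_{k_0,b})$ carries a $1$ while $X(B_{k,b}) = 0$ for $k \ne k_0$, with $k_0$ unique; since $b$ was arbitrary, $X$ is a block permutation matrix.

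The main obstacle, and the one computation worth doing with care, is the rewriting in the second step: one must verify that $U_{k,b}^{(l)}$ really is the claimed disjoint union and then substitute $\sum_{k' \ne k} n_{k'}^{(l')} = 1 - n_k^{(l')}$ correctly. Once the constraint collapses to $n_k^{(l)} = n_k^{(l')}$, the remaining difficulty is conceptual rather than computational, namely recognizing that the shift $l \mapsto (l \mod \nu)+1$ is a single cycle, which is exactly what propagates equality across all subindices of a block and pins down a unique occupied row block per column block.
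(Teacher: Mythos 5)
Your proof is correct and follows essentially the same route as the paper's: decompose the skewed column set into $T_{k,b}^{(l)}$ plus the shifted regions of the other row blocks, invoke the fact that each matrix column of a permutation matrix carries exactly one $1$, and propagate the resulting equality around the cycle $l \mapsto (l \bmod \nu)+1$ to pin down a unique occupied row block. Your occupancy numbers $n_k^{(l)}$ and the explicit equivalence ``(\ref{hagiwara}) holds iff $n_k^{(l)} = n_k^{(l')}$'' are a cleaner packaging of the same argument, unifying the two directions that the paper treats separately.
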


The next example clarifies the linear constraints characterizing a $4 \times 4$ block permutation matrix.
\begin{example}
Let $n = 4, \nu =2, \gamma=2$.
The necessary and sufficient condition for a permutation matrix $X \in \Pi_4$ being 
a block permutation matrix are as follows:
\begin{eqnarray} \nonumber
X_{1,1} + X_{2,1} + X_{3,2} + X_{4,2}  &=& 1 \\ \nonumber
X_{1,2} + X_{2,2} + X_{3,1} + X_{4,1}  &=& 1 \\ \nonumber
X_{1,3} + X_{2,3} + X_{3,4} + X_{4,4}  &=& 1 \\ \nonumber
X_{1,4} + X_{2,4} + X_{3,3} + X_{4,3}  &=& 1. \\ \label{subblock1}
\end{eqnarray}
\end{example}

Let us denote the set of block permutation matrices by
\begin{equation}
\Pi(n, \nu) \defeq \{X \in \Pi_n \mid X \mbox { satisfies } (\ref{hagiwara})    \}.
\end{equation}
Note that we here employ a lighter notation $\Pi(n, \nu)$ instead of $\Pi(A,  b, \unlhd)$ since it 
explicitly express dependency on $n$ and $\nu$.
It should be remarked that $\Pi(n, \nu)$ forms a group under matrix multiplication over $\Bbb R$.

The class of block permutation codes defined below is a class of LP decodable permutation codes.
\begin{definition}[Block permutation code]
Let $n$ be a positive integer. A positive integer $\nu$ is a divisor of $n$.
The initial vector $ s$ belongs to $\Bbb R^n$.
The {\em block permutation code} $C(n, \nu,  s)$ is defined by 
\begin{equation} \label{ineq}
C(n, \nu,  s) \defeq \{ X  s \in {\Bbb R }^n : X \in \Pi(n, \nu)  \}.
\end{equation}
\hfill\fbox{}
\end{definition}

In Section \ref{decodinganalysis}, we saw the minimum pseudo distance is 
one of most influential parameters for LP decoding performance.
Unfortunately, the evaluation of the minimum pseudo distance is not a trivial problem.
As a possible alternative, we here evaluate 
the minimum squared Euclidean distance of $C(n, \nu,  s)$ defined by 
\begin{equation}
d_{min}^2 \defeq \min_{ x,  y \in C(n, \nu,  s) ( x \ne  y)} || x -  y||^2.
\end{equation}
At least, we can say that  decoding performance degrades  even with an ML decoder 
if $C(n, \nu,  s)$ has small $d_{min}^2$.

The block-wise permutation structure of a block permutation code 
can be exploited for deriving a simple formula 
on the minimum squared Euclidean distance.

Let  us define $\Delta_1^2$ and $\Delta_2^2$ by
\begin{eqnarray} \nonumber
\Delta^2_1 &=&\min_{k \in [1, \gamma] }  
\min_{Q  \in \Pi_\nu (Q \ne I)}  || s_{k} - Q  s_{k}  ||^2 \\
\Delta^2_2 &=& \min_{k,j \in [1, \gamma] (k \ne j) } \min_{Q \in \Pi_\nu  } ||s_{k} - Q s_{j}  ||^2.
\end{eqnarray}
Assume that  both $\Delta_1^2$ and $\Delta_2^2$ are positive for given $n, \nu, s$.
In such a case, $C(n, \nu,  s)$ is non-singular and it is easily proved that 
the minimum squared Euclidean distance of $C(n, \nu,  s)$ is given by
\begin{equation}\label{mind}
d^2_{min} =\min\{  \Delta_1^2,  2\Delta_2^2  \}.
\end{equation}

The following example illustrates that a block permutation code can have more codewords
than those of a trivial cartesian product code under the condition that both of codes have the same minimum 
squared Euclidean distance.
\begin{example}
Let $n=8, \gamma=2, \nu=4$. The initial vector $s=(s_1^T,s_2^T)^T$ is assumed to be 
\[
 s_1 = 
\left(
\begin{array}{c}
1 \\
3 \\
5 \\
7\\
\end{array}
\right), \quad
 s_2 = 
\left(
\begin{array}{c}
2 \\
4 \\
6 \\
8 \\
\end{array}
\right)
\]
From the definition of $\Delta_1^2, \Delta_2^2$, we easily obtain 
$
\Delta_1^2= 8, \quad \Delta_2^2  = 4.
$
From (\ref{mind}), we have 
$
d^2_{min} = \min \{8, 2\times 4 \} = 8.
$
The number of codewords is $\gamma ! \times (\nu!)^\gamma = 1152$.
The cartesian product code defined by
\[
\{ ((Y_1\  s_1)^T, (Y_2\  s_2)^T )^T \in \Bbb R^{64} \mid  Y_1, Y_2 \in \Pi_4  \},
\]
has also squared Euclidean distance 8 but it contains 576-codewords, 
which is half of the number of codewords of the block permutation code.
\hfill \fbox{}
\end{example}

\section{Randomly constrained permutation matrices}
\label{random}

In the previous section, we discussed a set of structured permutation matrices.
Another possible choice for linear constraints is to generate them randomly.
Such random linear constraints are amenable for probabilistic analysis and 
appears interesting from information theoretic view. In this section, we study a 
class of LP decodable permutation codes defined based on  random constraints.

\subsection{Sparse constraint matrix ensemble}

Since the LP decodable permutation codes are non-linear codes,
the cardinality of a given code cannot be determined directly from the constraints in general.
In the following part of this section, we will analyze the cardinality of codes and their
Hamming weight distributions.

A sparse constraint matrix ensemble is assumed 
in the following analysis, which has a close relationship to the analysis on average 
weight distribution of LDPC ensembles \cite{LS02}.

The linear constraint assumed here is the equality constraint for two variables 
such as $X_{i,j}=X_{k,l}$. As discussed in Section X, linearly constrained permutation 
matrices defined based on this equality constraint is important because such 
matrices can be used as building blocks of a generalized block permutation code.

Let $S$ be the set of binary constraint matrices:
\begin{equation}
S \defeq \{A \in  \{0,1\}^{m \times n^2}: \mbox{every row of $A$ contains $2$-ones}\}.
\end{equation}
We assign the uniform probability 
\begin{equation}
P(A) \defeq \frac{1}{{n^2  \choose 2}^m}
\end{equation}
to each matrix in $S$. The pair $(S, P)$ can be considered as 
an ensemble of matrices, which becomes the basis of the following probabilistic method.

Assume that $\theta: S \rightarrow \{-1, 0,1\}^{m \times n^2}$
is defined by $B = \theta(A)$, where 
\begin{equation}
B_{i,j} = 
\left\{
\begin{array}{cl}
-A_{i,j},  & \mbox{if } \forall j' \in [1,j-1], A_{i,j'} = 0,   \\
A_{i,j}, & \mbox{otherwise}.
\end{array}
\right.
\end{equation}
Note that $\theta(A) \sfvec (X) = 0$ corresponds to $m$ equality constraints of two variables.

In this section, we focus on the LP decodable permutation code 
$\Lambda(\theta(A),0,  \unlhd,  s)$, where $A \in S$ and 
$\unlhd \defeq (\overbrace{=,=  \ldots, =}^{m \mbox{\scriptsize}})^T$.
The symbol $\ve 1$ denotes the vector of length $m$ whose entries are all ones.
Extensions of the analysis for more general classes of LP decodable permutation codes are possible, but 
we here focus on the simplest class to explain the idea of the analysis.
Throughout this section, we assume that components of the initial vector $ s$ differ each other.

\subsection{Probabilistic analysis on average cardinality of codes}

The number of codewords in $\Lambda(\theta(A), 0, \unlhd,  s)$ is given by
\begin{equation}
M(A) \defeq \sum_{X \in \Pi_n} \Bbb{I} [\theta(A)\  \mbox{\sf vec}(X) \unlhd 0],
\end{equation}
where $\Bbb I$ is the indicator function. The indicator function takes the value one when the 
given condition is true and otherwise gives the value zero.
The next lemma gives the average cardinality of this code.

\begin{lemma}[Average cardinality of codes]
The average cardinality of $\Lambda(\theta(A), 0, \unlhd,  s)$ is given by 
\begin{equation} \label{avecard}
{\sf E}[M(A)] = n! \left(\frac{{n \choose 2}+{n^2-n \choose 2} }{{n^2 \choose 2} }   \right)^m,
\end{equation}
where the operator ${\sf E}$ denotes the expectation defined on $(S, P)$.
\end{lemma}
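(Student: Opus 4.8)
The plan is to apply linearity of expectation and then exploit the fact that every permutation matrix contains the same number of ones. First I would write, since $\unlhd$ consists entirely of equalities so that $\theta(A)\,\sfvec(X) \unlhd 0$ means $\theta(A)\,\sfvec(X) = 0$,
\begin{equation}
{\sf E}[M(A)] = \sum_{X \in \Pi_n} {\sf E}\left[ \Bbb{I}[\theta(A)\,\sfvec(X) = 0] \right] = \sum_{X \in \Pi_n} P[\theta(A)\,\sfvec(X) = 0],
\end{equation}
reducing the problem to computing, for each fixed $X \in \Pi_n$, the probability that all $m$ random equality constraints are satisfied.

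Next I would unpack the action of $\theta$. By its definition, each row of $A$ selects an unordered pair of positions $\{a,b\}$ in $\sfvec(X)$, and after applying $\theta$ the corresponding row of $\theta(A)$ carries a $-1$ at the first selected position and a $+1$ at the second. Hence the $i$-th constraint $(\theta(A)\,\sfvec(X))_i = 0$ is exactly the equality $X_a = X_b$ between the two entries of $X$ indexed by that row. This constraint holds precisely when the two chosen entries of $X$ are equal, i.e. when both are $1$ or both are $0$.

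The key observation is that this per-row satisfaction probability is independent of $X$. Every permutation matrix $X \in \Pi_n$ has exactly $n$ ones and $n^2 - n$ zeros, regardless of which permutation it represents. Since each row of $A$ draws its pair uniformly from the ${n^2 \choose 2}$ two-element subsets of positions, the probability that a single constraint holds is
\begin{equation}
p = \frac{{n \choose 2} + {n^2 - n \choose 2}}{{n^2 \choose 2}},
\end{equation}
where ${n \choose 2}$ counts the pairs of ones and ${n^2 - n \choose 2}$ counts the pairs of zeros. Because the uniform measure $P$ makes the $m$ rows independent, the joint probability that all $m$ constraints hold factors as $p^m$, still uniformly over $X$.

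Finally I would substitute this back: each of the $n!$ terms in the sum equals $p^m$, so
\begin{equation}
{\sf E}[M(A)] = n!\, p^m = n! \left( \frac{{n \choose 2} + {n^2 - n \choose 2}}{{n^2 \choose 2}} \right)^m,
\end{equation}
which is the claimed formula. There is no genuine obstacle here; the only point requiring care is verifying that $\theta$ converts each two-ones row into a difference constraint $X_a - X_b = 0$ and that the number of ones is constant across $\Pi_n$, which is exactly what collapses the sum over permutation matrices into a single closed-form term.
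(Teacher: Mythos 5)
Your proposal is correct and follows essentially the same route as the paper: exchange the order of summation (linearity of expectation), observe that the per-row satisfaction probability depends only on the fact that every $X \in \Pi_n$ has exactly $n$ ones and $n^2-n$ zeros, count the pairs lying entirely inside or entirely outside the support to get ${n \choose 2}+{n^2-n \choose 2}$ out of ${n^2 \choose 2}$, and use row-wise independence to raise this to the $m$-th power. The only cosmetic difference is that the paper phrases the symmetry step as counting the matrices $A$ satisfying the constraint for an arbitrary fixed $X'$ and dividing by $|S|$, whereas you phrase it directly as a probability.
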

\begin{proof}
From the definition of $M(A)$, the expectation of the cardinality $M(A)$ can be written as
\begin{eqnarray} \nonumber
{\sf E}[M(A)] &=& \sum_{A \in S} P(A) M(A) \\
&=& \sum_{A \in S} P(A)\sum_{X \in \Pi_n} \Bbb {I}[\theta(A)\  \mbox{\sf vec}(X) \unlhd 0] .
\end{eqnarray}
By changing the order of summation, we can further transform this into 
\begin{eqnarray} \nonumber
{\sf E}[M(A)] 
&=& \sum_{X \in \Pi_n}\sum_{A \in S} P(A) \Bbb {I}[\theta(A)\  \mbox{\sf vec}(X) \unlhd 0] \\ \label{lem1}
&=& \frac{n!}{{n^2  \choose r}^m} \sum_{A \in S}  \Bbb {I}[\theta(A)\  \mbox{\sf vec}(X') \unlhd 0],
\end{eqnarray}
where $X'$ is an arbitrary permutation matrix in $\Pi_n$.  The last equality is due to 
the symmetry of the ensemble. Namely, this means that the quantity 
$
 \sum_{A \in S}  \Bbb{I}[\theta(A)\  \mbox{\sf vec}(X') \unlhd 0]
$
does not depend on the choice of $X'$. The evaluation of 
$
 \sum_{A \in S}  \Bbb{I}[\theta(A)\  \mbox{\sf vec}(X') \unlhd  0]
$
can be performed on the basis of the following combinatorial argument.

It is evident that any $X' \in \Pi_n$ contains $n$-ones as its components.
This implies that $ x' \defeq \sfvec(X')$ is a binary vector of length $n^2$ with Hamming weight $n$.
Let  $I_1 \defeq \{i \in [1, n^2] \mid x'_i = 1\}$, where $x_i' $ is the $i$th element 
of $ x'$.
Consider the first row of $A$, which is denoted by $a^T$. The relation 
$
 \theta(a^T)  x' = 0
$
holds if and only if 
\begin{eqnarray}
 |\{i \in I_1\mid a_i = 1\}| = 2\ \mbox{or } \  |\{i \in [1, n^2] \backslash I_1  \mid a_i = 1\}| = 2.
\end{eqnarray}
The number of possible ways to choose such a vector $ a$ is given by 
\begin{equation}
{n \choose 2}+{n^2-n \choose 2}.
\end{equation}
The term ${n \choose 2}$ corresponds to the number of possible ways such 
that $I_1$ (of cardinality $n$)  contains $2$-ones. On the other hand, 
${n^2-n \choose 2}$ represents the number of possible ways that remaining parts contains $2$-ones.
Since each row of $A$ can be chosen independently,  we consequently have 
\begin{equation} \label{Anum}
 \sum_{A \in S}  I[\theta(A)\  \mbox{\sf vec}(X') \unlhd 0] = \left({n \choose 2}+{n^2-n \choose 2} \right)^m.
\end{equation}
Substituting (\ref{Anum}) into (\ref{lem1}), we immediately obtain the claim of the lemma. 
\end{proof}

\begin{example}
In this experiment,
the number of $10 \times 10$ permutation matrices satisfying randomly generated equality 
constraints of two variables was counted. Figure \ref{averagesize} plots the cardinality of 
100-samples for the cases where  $m=30, 40, 50$.
The figure includes the ensemble average of the cardinality given by (\ref{avecard}) and 
the sample mean of the cardinality. The figure shows that cardinalities are scattered around
the ensemble average and that the sample mean agree with the ensemble average with reasonable 
accuracy.

This figure shows a trade-off relation between the number of additional equalities $m$ and
the cardinality. As (\ref{avecard})  indicates, the average cardinality is an exponentially 
decreasing function of $m$.
\hfill\fbox{}
\end{example}
\begin{figure}[htbp]
\begin{center}
\includegraphics[scale=0.9]{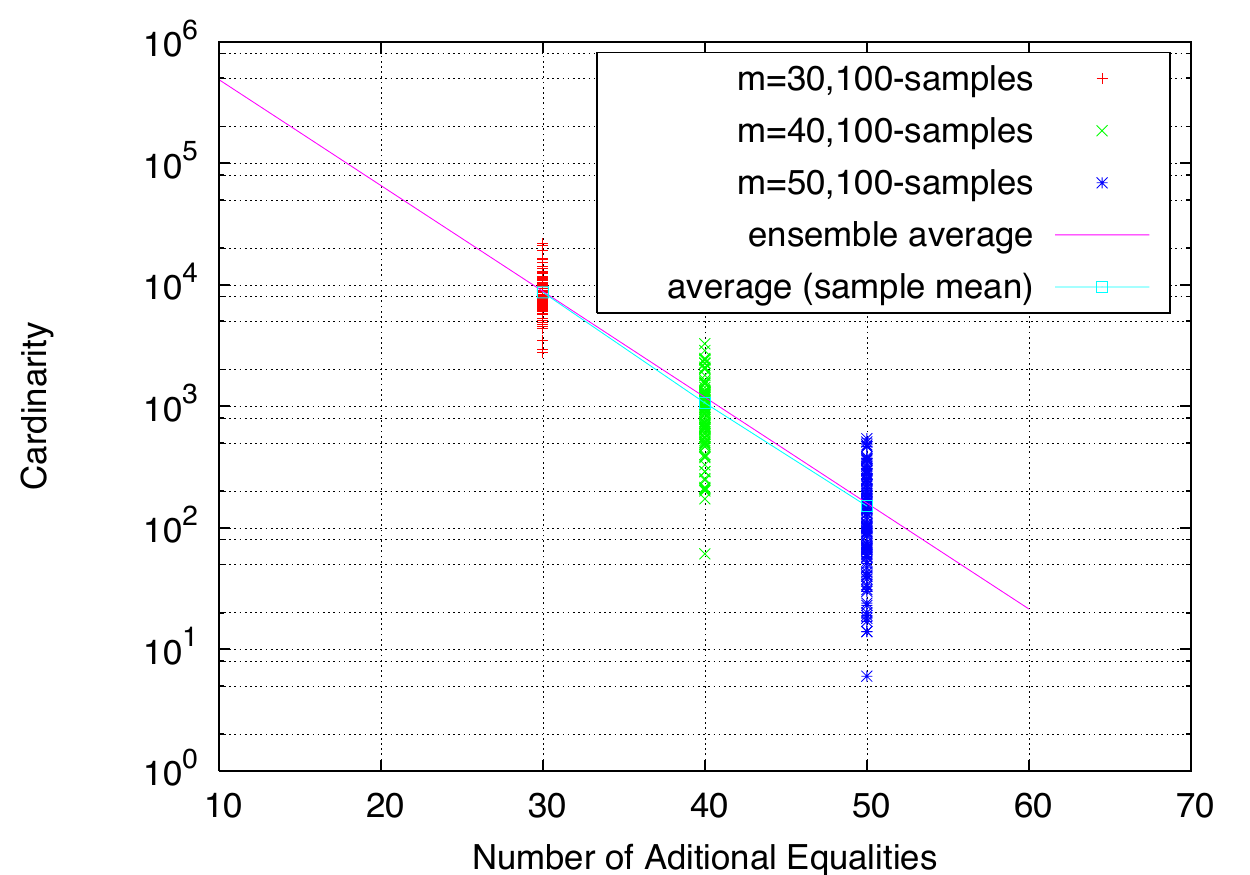} 
\caption{Relation between additional equalities $m$ and average cardinality}
  \label{averagesize}
\end{center}
\end{figure}

\subsection{Probabilistic analysis on weight distribution}

The origin $ o \defeq (o_1,\ldots, o_n)$ is an arbitrary permutation vector of length $n$; namely, 
$
 o \in \Lambda( s).
$
The number of codewords of $\Lambda(\theta(A), 0, \unlhd,  s)$
with Hamming weight $w$ is denoted by $L_w(A)$, where the 
Hamming weight $w_H(\cdot)$ is defined by
\begin{equation}
w_H( x) \defeq \sum_{i=1}^n \Bbb I[o_i \ne x_i],
\end{equation}
where $ x = (x_1,\ldots, x_n)$. This means the Hamming weight of $ x$ is equal to 
the Hamming distance between the origin and $ x$.
In other words, $L_w(A)$ is defined as 
\begin{equation}
L_w(A) \defeq \sum_{ x \in  \Lambda(\theta(A), 0, \unlhd,  s) } \Bbb I [w_H( x) = w].
\end{equation}
The set $\{L_1(A),\ldots, L_n(A) \}$ is referred to as the weight distribution of 
$\Lambda(\theta(A), 0, \unlhd,  s)$.

The next lemma gives the ensemble average of the weight distribution.
\begin{lemma}
The average weight distribution of the linearly constrained permutation code 
$\Lambda(\theta(A), 0, \unlhd,  s)$ is given by 
\begin{equation}
{\sf E}[L_w(A)] = {n \choose w} \left\lfloor \frac{w! + 1}{e} 
\right\rfloor \left( \frac{{n \choose 2} + {n^2-n \choose 2} }{{n^2 \choose 2}} \right)^m.
\end{equation}
\end{lemma}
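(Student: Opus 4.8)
The plan is to mirror the proof of the preceding lemma on average cardinality, decoupling the count of permutations at a fixed Hamming weight from the ensemble probability that such a permutation satisfies the random constraints. First I would use non-singularity of the code (guaranteed since the components of $s$ are distinct) to rewrite the weight enumerator directly over permutation matrices,
\[
L_w(A) = \sum_{X \in \Pi_n} \Bbb I[\theta(A)\ \sfvec(X) \unlhd 0]\, \Bbb I[w_H(Xs)=w].
\]
Taking the expectation over $(S,P)$ and exchanging the order of the two summations gives
\[
{\sf E}[L_w(A)] = \sum_{X \in \Pi_n,\ w_H(Xs)=w}\ \sum_{A \in S} P(A)\, \Bbb I[\theta(A)\ \sfvec(X) \unlhd 0],
\]
which is exactly the expression appearing in the cardinality lemma, except that the outer sum now ranges only over those $X$ whose codeword lies at Hamming distance $w$ from the origin $o$.

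Next I would invoke the symmetry of the ensemble established in the previous proof: the inner quantity $\sum_{A \in S} P(A)\,\Bbb I[\theta(A)\ \sfvec(X) \unlhd 0]$ depends only on the fact that $\sfvec(X)$ is a weight-$n$ binary vector, not on the particular $X$, and equals
\[
\left( \frac{{n \choose 2} + {n^2-n \choose 2}}{{n^2 \choose 2}} \right)^m.
\]
Since this factor is constant over the index set of the outer sum, it pulls out, and ${\sf E}[L_w(A)]$ reduces to this probability factor multiplied by the cardinality of $\{X \in \Pi_n : w_H(Xs)=w\}$.

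It then remains to count the permutation matrices at Hamming weight $w$. Because the entries of $s$ are distinct, $(Xs)_i \ne o_i$ holds exactly when the permutation underlying $X$ disagrees with the permutation underlying $o$ in position $i$; hence $w_H(Xs)$ equals the number of positions in which the two permutations differ, and this count depends only on $w$, not on the choice of origin. I would enumerate such $X$ by choosing the $w$ positions of disagreement in ${n \choose w}$ ways and placing a fixed-point-free permutation (a derangement) on them, giving ${n \choose w}D_w$, where $D_w$ denotes the $w$-th derangement number. Substituting the closed form $D_w = \lfloor (w!+1)/e \rfloor$ yields the claimed expression.

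The main obstacle is the middle step: justifying that the averaged constraint-satisfaction probability is genuinely independent of $X$ within the weight-$w$ class. This is precisely where the symmetry of the sparse ensemble is essential, since it decouples the randomness in $A$ from the combinatorial structure of $X$, so that the weight constraint $w_H(Xs)=w$ and the algebraic constraint $\theta(A)\ \sfvec(X)\unlhd 0$ may be counted independently. Once this decoupling is secured, the remaining work is the standard derangement count together with the identification $D_w = \lfloor (w!+1)/e \rfloor$, which is routine.
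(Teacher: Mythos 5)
Your proposal is correct and follows essentially the same route as the paper: rewrite $L_w(A)$ as a sum over the weight-$w$ class $Z_w(o)=\{X\in\Pi_n : w_H(Xs)=w\}$, exchange the order of summation, use the ensemble symmetry (the inner probability depends only on $\sfvec(X)$ being a weight-$n$ binary vector, exactly as in the cardinality lemma) to factor out $\bigl(({n \choose 2}+{n^2-n \choose 2})/{n^2 \choose 2}\bigr)^m$, and then count $|Z_w(o)|={n\choose w}\lfloor (w!+1)/e\rfloor$ via the choice of the $w$ disagreement positions and a derangement on them. No gaps; the decoupling step you flag as the main obstacle is handled identically in the paper by appeal to the symmetry established in the preceding lemma.
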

\begin{proof}
The weight distribution $L_w(A)$ can also be expressed as 
\begin{equation}
L_w(A) = \sum_{X \in Z_w( o)} \Bbb I [\theta(A)\  \mbox{\sf vec}(X) \unlhd  0],
\end{equation}
where $Z_w( o)$ is defined by 
\begin{equation}
Z_w( o) \defeq \{X \in \Pi_n: w_H(X  s) = w\}.
\end{equation}

The expectation can be simplified as follows:
\begin{eqnarray}  \nonumber
{\sf E}[L_w(A)] 
&=& \sum_{A \in S} P(A)\sum_{X \in Z_w( o)} \Bbb I [\theta(A)\  \mbox{\sf vec}(X) \unlhd 0] \\ \nonumber
&=&\frac{1}{{n^2  \choose r}^m} \sum_{X \in Z_w( o)}  \sum_{A \in S} \Bbb I [\theta(A)\  \mbox{\sf vec}(X) \unlhd 0] \\ \nonumber
&=& \left( \frac{{n \choose 2} + {n^2-n \choose 2} }{{n^2 \choose 2}} \right)^m |Z_w( o) |. \\
\end{eqnarray}
The last equality is due to the symmetry of the ensemble and (\ref{Anum}).

The cardinality of $Z_w( o)$ is given by the following combinatorial argument.
Let $ x \in \Lambda( s)$ be an arbitrary vector satisfying $w_H( x)=w$.
The index set  $I_{diff}$ is defined by
$
I_{diff}( x) \defeq \{i \in [1,n] \mid o_i \ne x_i\}.
$
Let $T \subset [1,n]$ be an index set of cardinality $w$.
The quantity 
$
|\{ x \in \Lambda( s) \mid T = I_{diff}( x) \}| 
$
is equal to the number of derangements of length $w$, 
which  is known to be $\lfloor (w!+1)/e \rfloor$  \cite{IS}.
Note that the number of possible ways to choose $T$ is ${n \choose w}$.
Thus, we have the equality 
\begin{equation}
|Z_w( o) | = {n \choose w} \left\lfloor \frac{w! + 1}{e} \right\rfloor.
\end{equation}
This completes the proof of the lemma. 
\end{proof}
Note that the origin assumed here may not be included in $\Lambda(\theta(A),0,\unlhd,s)$.


\section{Conclusion}
\label{conclusion}
In this paper, a novel class of permutation codes, LP decodable permutation codes,  is introduced.
The LP decodable property is the main feature of this class of permutation codes.

The set of doubly stochastic matrices, i.e., the Birkhoff polytope, have 
$n!$ integral vertices which are permutation matrices. Additional linear constraints 
defines a code polytope which plays a fundamental role in the coding scheme presented in this paper.
An LP decodable permutation code is the set of integral vertices of a code polytope.

In an LP decoding process, a certain linear objective function is maximized under the assumption that 
the feasible set is a code polytope.
The decoding performance can be evaluated from geometrical properties of a code polytope.

The choice of additional linear constraints are crucial to construct  good codes.
In this paper, two approaches are discussed; namely, structured  permutation matrices and randomly constrained permutation matrices. 

Section \ref{structured} introduces some classes of structured linearly permutation matrices. Especially, it has been shown that
the pure involution codes have several nice properties; they are easy to encode and their error correction performance is much better than 
the trivial repetition code.

The random constraints discussed in Section \ref{random} enable us to use probabilistic methods for analyzing  
some properties of codes. The probabilistic methods \cite{Alon} are very powerful tool for grasping the relation between 
the number of constraints and important code parameters such as the cardinality of a code.

Although the paper provides fundamental aspects of the LP decodable permutation codes,
a number of problems remain still open. 
The following list is a part of open problems. 
\begin{enumerate}
\item Construction of good block permutation codes including a choice of an initial vector
\item Efficient algorithm for solving the LP problem arising in the LP decoding.
\item Permutation modulation for linear vector channels; let $H$ be a $n \times n$ real matrix. 
An ML decoding problem for a linear vector channel can be formulated as 
\begin{equation} \label{mdrule2}
\mbox{minimize } ||  y - H  x||^2 \mbox{ subject to }  x \in \Lambda(A,  b,\unlhd,  s).
\end{equation}
As discussed in this paper, the decoding problem can be relaxed to a quadratic programming (QP) problem:
\begin{equation} \label{mdrule2}
\mbox{minimize } ||  y - H  x||^2 \mbox{ subject to }  x \in {\cal P}(A,  b,\unlhd,  s).
\end{equation}
A QP-based decoding algorithm like \cite{wadayama2} appears interesting for this problem.
\item An application to rank modulation 
\end{enumerate}
Further investigation on related topics may open an 
interesting interdisciplinary research field among coding and  combinatorial optimization.

\section*{Appendix}

\subsubsection{Code polytopes for some classes of linearly constrained permutation matrices}

Table \ref{tightness} presents linear constraints for some sets of permutation matrices and 
their integrality of corresponding code polytopes. In this table, it is assumed that $X \in \Bbb R^{4 \times 4}$.
The integrality is numerically checked with the vertex enumeration program {\sf cdd} based 
on double description method by K. Fukuda  \cite{fukuda}.

\begin{table}[htdp]
\caption{Code polytopes and its properties ($n=4$)}
\label{tightness}
\begin{center}
\begin{tabular}{llll}
\hline
\hline
set of perm. matrices  & additional constraints & integrality & $|V|$\\
\hline
cyclic perm. mat. & (\ref{cyclic}) & Y & 4 \\
derangement  & ${\sf trace} (X)=0$ & Y & 9\\
involution & $X = X^T$ & N  & 14\\
transposition (1) & ${\sf trace} (X)=n-2$ & N & 20 \\
transposition (2) & ${\sf trace} (X)=n-2 $ & Y & 6 \\
 & $X=X^T$ &  & \\
$2\times 2$ block & constraints (\ref{subblock1}) & N & 28  \\
$2\times 2$ block & constraints (\ref{subblock1})  and (\ref{subblock2}) & Y & 8 \\
\hline
\end{tabular}
\end{center}
The column of integrality (Y/N) represents the code polytope is integral (Y) or not (N).
The column $\# V$ denotes the number of vertices on the code polytope.
\label{default}
\end{table}%

Some remarks on Table \ref{tightness} are listed as follows.
\begin{enumerate}
\item Cyclic permutation matrices
The cyclic permutation matrices of order 4 is given by the following additional linear constraints:
\begin{eqnarray} \nonumber
X_{1,1} = X_{2,2},\  X_{2,2}=X_{3,3}, \ X_{3,3}= X_{4,4} \\ \nonumber
X_{2,1}= X_{3,2},\ X_{3,2} = X_{4,3},\ X_{4,3} = X_{1,4} \\ \nonumber
X_{3,1}=X_{4,2},\ X_{4,2} = X_{1,3},\ X_{1,3}=X_{2,4} \\ \label{cyclic}
X_{4,1}= X_{1,2},\ X_{1,2}=X_{2,3},\ X_{2,3} = X_{3,4}.
\end{eqnarray}
In a similar way as in the case $n=4$, we can define the cyclic permutation matrices of order $n$.
The general expression the constraint for $n \times n$ cyclic permutation matrices is given by
\begin{equation}
\forall i, j \in [1,n],\quad X_{i,j} = X_{(i \mod n)+1, (j \mod n)+1}.
\end{equation}
\item Transposition: 
The permutation matrices satisfying the linear constraint ${\sf trace}(X)=n-2$ exactly coincides with 
the set of transpositions (i.e., permutations of two elements). Note that the constraint ${\sf trace}(X)=n-2$
does not give the tight polytope. Combining a redundant constraint $X=X^T$ (i.e., the involution constraint) to 
the trace constraint, the relaxed polytope becomes tight. This example indicates that redundant constraints
are necessary for constructing a tight polytope in some cases.  
\item Block constraint: The linear constraints for block permutation matrices (\ref{subblock1}) introduced in Theorem \ref{characterization}
does not give the tight polytope in $n=4$.
However, combining (\ref{subblock1}) and a set of redundant constraints (i.e., 90 degree rotation of (\ref{subblock1}))
\begin{eqnarray} \nonumber
X_{1,1} + X_{1,2} + X_{2,3} + X_{2,4}  &=& 1 \\ \nonumber
X_{2,1} + X_{2,2} + X_{1,3} + X_{1,4}  &=& 1 \\ \nonumber
X_{3,1} + X_{3,2} + X_{4,3} + X_{4,4}  &=& 1 \\  \label{subblock2}
X_{4,1} + X_{4,2} + X_{3,3} + X_{3,4}  &=& 1, 
\end{eqnarray}
we have the convex hull of $2 \times 2$ block permutation matrices. This case also shows importance of
redundant constraints from the optimization perspective. From this result,  it is expected that
the LP decoding performance of block permutation codes might be improved by incorporating these redundant 
linear equalities.
\end{enumerate}

\subsection*{Proof of Theorem \ref{characterization}}
\begin{proof}
In the first part of the proof, we will show that
any block permutation matrix satisfies (\ref{hagiwara}).

Assume that $k,b \in [1, \gamma]$ and $l \in [1, \nu]$ are arbitrary chosen.
From the definition of the skewed column set $U_{k,b}^{(l)}$, the left-hand side of (\ref{hagiwara})
can be rewritten as 
\begin{eqnarray} \nonumber
\sum_{ (u,v) \in U_{k, b}^{(l)} } X_{u,v} \hspace{-2mm}
&=& \hspace{-3mm}\sum_{ (u, v) \in T_{k, b}^{(l)} } X_{u,v} \\ \nonumber
&+& \hspace{-5mm}\sum_{k' \in [1, \gamma] \backslash \{k\}  } 
\left( \sum_{ (u, v) \in T_{k' , b}^{ (l \mod \nu)+1 } }  X_{u,v}  \right).\\ \label{eq:lemma:proof01}
\end{eqnarray}
Recall that  $X$ is assumed to be a block permutation matrix.
This means that there exists a unique block index $\kappa \in [1, \gamma]$ satisfying $X(B_{\kappa, b}) \ne  0$
for given block index $b$,  and the sub-matrix $X(B_{\kappa, b})$ is a permutation matrix.
If $k = \kappa$ holds, then 
\begin{equation}
\sum_{ (u,v) \in U_{k,b}^{(l)} } X_{u,v} = \sum_{(u,v) \in T_{k, b}^{(l)} } X_{u,v} = 1
\end{equation}
holds. Otherwise (i.e., $k \ne \kappa$), the equality 
\begin{equation}
 \sum_{ (u,v) \in U_{k,b}^{(l)} } X_{u,v} = \sum_{(u,v) \in T_{\kappa,b}^{  (l\mod \nu)+1 }} X_{u,v}   = 1.
\end{equation}
holds.  Thus, it has been proved that (\ref{hagiwara}) holds if $X$ is a block permutation matrix. 

We then move to the opposite direction; i.e., (\ref{hagiwara}) implies that $X$ is a block permutation matrix.

Assume that a block index $b \in [1, \gamma]$ and  a subindex $l \in [1, \nu]$ are arbitrary chosen.
Let $j = \nu (b-1) + l$. Since $X$ is a permutation matrix,
there exists the unique row index $i \in [1, n]$ satisfying $X_{i,j} = 1$.
The block $B_{k, b}$ containing the set of indices $(i,j)$ is uniquely determined 
because the blocks are mutually disjoint.
Under this setting, it is clear that $X(B_{k, b}) \neq 0$ holds.

In the following, we will show that
\begin{equation}
 k' \neq k \Rightarrow X(B_{k', b}) = 0.
 \end{equation}
From the definition of the block index $k$, It is clear that 
\begin{equation}\label{eq:lemma:proof02}
\sum_{(u,v) \in T_{k,b}^{(l)} } X_{u,v} = 1 
\end{equation}
holds.
Combining Eq. (\ref{eq:lemma:proof01}) and Eq. (\ref{eq:lemma:proof02}), we immediately obtain 
\begin{equation}\label{eq:lemma:proof03}
\sum_{k' \in [1, \gamma], k' \neq k  } \left( \sum_{ (u, v) \in T_{k' , b}^{ ( l\mod \nu )+1 } }  X_{u,v}  \right) = 0.
\end{equation}
This equality implies that 
\begin{equation}
(u,v) \in \bigcup_{k' \in [1, \gamma] \backslash \{k\}}  T_{k' , b}^{ ( l\mod \nu )+1 } \Rightarrow X_{u,v} = 0.
\end{equation}
Because $X$ is a permutation matrix, 
\begin{equation}
\sum_{(i,j) \in T_{k,b}^{( l \mod \nu )+1} } = 1 
\end{equation}
should be satisfied.
Applying the same argument iteratively, we consequently have
\begin{equation}
(u,v) \in \bigcup_{k' \in [1, \gamma] \backslash \{k\} } \bigcup_{l' \in [1, \nu]}  T_{k' , b}^{ ( l' ) } \Rightarrow X_{u,v} = 0.
\end{equation}
This statement is equivalent to  $ k' \neq k \Rightarrow X(B_{k', b}) = 0$. 
Due to the definition of the block permutation matrix, 
it has been proved that $X$ should be a block permutation matrix.
\end{proof}

\subsection*{Acknowledgement}

The first author appreciates Prof. Han Vinck for directing the author's attention to the field of 
permutation codes. The first author also wishes to thank Dr. Jun Muramatsu for  inspiring discussions on permutations.
This work was partly supported by the Ministry of Education, Science, Sports and Culture, Japan, Grant-in-Aid: 22560370.
We thank IBM academic initiative for IBM ILOG CPLEX Optimization Studio.

\subsection*{Biography}
\noindent
Tadashi Wadayama was born in Kyoto, Japan,on May 9,1968. 
He received the B.E., the M.E., and the D.E. degrees from Kyoto Institute of Technology 
in 1991, 1993 and 1997, respectively. Since 1995, he has been with Okayama Prefectural University 
as a research associate. In 2004, he moved to Nagoya Institute of Technology as an associate professor. 
Since 2010, he has been a professor of Department of Computer Science, 
Nagoya Institute of Technology.   His research interests are in coding theory, information theory, 
and digital communication/storage systems. He is a member of IEICE, and IEEE. \\[0.5cm]

\noindent
Manabu Hagiwara
received the B.E. degree in mathematics from
Chiba Univ. in 1997, and the M.E., and Ph.D.
degrees in mathematical science from the Univ.
of Tokyo in 1999 and 2002, respectively.
From 2002 to 2005 he was a postdoctoral fellow
at IIS, the Univ. of Tokyo. He also was a
researcher at RIMS, Kyoto University, 2002.
Currently, he is a research scientist of
Research Center for Information Security,
National Institute of Advanced Industrial Science
and Technology, and is an associated
professor of Center for Research and Development
Initiative, Chuo Univ. He also is a research
scholar at Univ. of Hawaii. His current research
interests include coding theory, cryptography,
information security, and algebraic combinatorics.


\begin{thebibliography}{99}

\bibitem{Blake1979}
I. F. Blake, G. Cohen, and M. Deza, 
``Coding with permutations,'' Inform. Contr., vol. 43, pp. 1--19, 1979.

\bibitem{Chang2003}
J.C. Chang, R.J. Chen, T. Kl\o ve and S.C. Tsai, 
``Distance-preserving mappings from binary vectors to permutations, ''
IEEE Transactions on Information Theory, vol. 49, no.4, pp.1054-1059, Apr. 2003.

\bibitem{Chang2005}
J.C. Chang, 
``Distance-increasing mappings from binary vectors to permutations,''
IEEE Transactions on Information Theory, vol.51,  pp.359-363, Jan. 2005.

\bibitem{Colbourn}
C. J. Colbourn, T. Kl\o ve, and A. C. H. Ling,
``Permutation arrays for powerline communication and mutually orthogonal latin squares, ''
IEEE Transactions on Information Theory, vol. 54, No. 6, June, 2004.

\bibitem{Forney}
G.D. Forney, Jr., R. Koetter, F.R. Kschischang, and A. Reznik, 
``On the effective weights of pseudocodewords for codes defined on graphs with cycles, ''
in Codes, Systems, and Graphical Models (B. Marcus and J. Rosenthal, eds.), vol. 123 of IMA Vol. Math. Appl., 
pp. 101-112, Springer Verlag, New York, Inc., 2001. 

\bibitem{Klove2002}
C. Ding, F. W. Fu, T. Kl\o ve and V. K. W. Wei,
``Constructions of permutation arrays,''
IEEE Transactions on Information Theory, vol. 48, no. 4, Apr. 2002.

\bibitem{Feldman}
J. Feldman,  ``Decoding error-correcting codes via linear programming,''
Massachusetts Institute of Technology, Ph. D. thesis, 2003.

\bibitem{Frankl1977}
P. Frankl and M. Deza, ``On the maximum number of permutations with given maximal and minimal distance,''
J. Comb. Theory, Ser. A, vol. 22, pp. 352--360, 1977.


\bibitem{Jiang2008-1}
A. Jiang, R. Mateescu, M. Schwartz, and J. Bruck, 
``Rank modulation for flash memories,'' in Proc. IEEE Int. Symp. Information Theory,  2008.

\bibitem{Jiang2008-2}
A. Jiang, M. Schwartz, and J. Bruck, 
``Error-correcting codes for rank modulation,''
in Proc. IEEE Int. Symp. Information Theory,  2008.

\bibitem{Klove2010}
T. Kl\o ve, T. Lin, S.-C.  Tsai, and W. G. Tzeng, 
``Permutation arrays under the Chebyshev distance,''
IEEE Transactions on Information Theory, vol. 56, no. 6, June 2010.

\bibitem{LS02}
  S.Litsyn and V. Shevelev,
  ``On ensembles of low-density parity-check codes: asymptotic distance distributions,''
  {\it IEEE Trans. Inform. Theory}, vol.48,  pp.887--908, Apr. 2002.

\bibitem{Han2000}
A. J. H. Vinck, 
``Coded modulation for powerline communications, ''
AE\"{U} Int. J. Electron. Commun., vol. 54, pp. 45--49, Jan. 2000.

\bibitem{Han2000-2}
A. J. H. Vinck, J. H\"{a}ring, and T. Wadayama, 
``Coded M-FSK for power-line communications,'' in Proc. IEEE Int. Symp. Information Theory, 2000.

\bibitem{Ferreira}
A. J. H. Vinck, and H.C. Ferreira,
``Permutation trellis-codes, '' in Proc. IEEE Int. Symp. Information Theory,  2001.

\bibitem{Wadayama2001}
T. Wadayama and A.J.Han Vinck, 
``A multilevel construction of permutation codes," 
IEICE Transactions on Fundamentals, vol.E84-A, no.10, pp.2518--2522,  2001.

\bibitem{Slepian}
D. Slepian, ``Permutation modulation'' ,Proc. IEEE, pp. 228-236,  1965.

\bibitem{Karlof}
J. Karlof, ``Permutation codes for the Gaussian channel,'' IEEE Trans. Inform. Theory, vol. 35, no. 4, pp. 726-732, July 1989.

\bibitem{Biglieri}
E. Biglieri and M. Elia, 
``Optimum permutation modulation codes and their asymptotic performance,''
 IEEE Trans. Inform. Theory,  vol. IT-22, no. 6, Nov. 1976.

\bibitem{Ingemarsson}
I. Ingemarsson, ``Optimized permutation modulation,'' 
IEEE Trans. Inform. Theory, vol. 36, pp. 1098-1100, Sept. 1990.

\bibitem{Berger}
T. Berger, F. Jelinek, and J. K. Wolf, ``Permutation    codes for 
sources,'' IEEE Trans. Inform. Theory, vol. IT-18, pp. 160-169, Jan.
1972.

\bibitem{Slepian2}
D. Slepian, ``Group codes for the Gaussian channel,'' 
Bell Syst. Tech. J., vol. 47, pp. 575-602, Apr. 1968.

\bibitem{Ungerboeck}
G. Ungerboeck, ``Channel coding with multilevel/phase signals,'' 
IEEE Trans. Itform. Theorv,vol.IT-28,Jan. 1982.

\bibitem{Barg}
A. Barg and A. Mazumdar, 
``Codes in permutations and error correction for rank modulation,''
IEEE Trans. Inform. Theory, vol. IT-56, pp. 3158 - 3165, July 2010.



\bibitem{Knuth}
D. Knuth ``The Art of Computer Programming Volume 3, ''
Addison-Wesley, 1998.

\bibitem{Alon}
N. Alon and J. H. Spencer, ``The Probabilistic Method, 3rd. ed.,''
John Wiley \& Sons, 2008.

\bibitem{KV}
R. Koetter and P. O. Vontobel, ``Graph covers and iterative decoding
of finite-length codes", 
in {\em Proc. 3rd Int. Symp. Turbo Codes and Related Topics}, Brest, France, Sep.  2003.

\bibitem{Bert} D. P. Bertsekas, ``Nonlinear programming, "  2nd edition, Athena Scientific, 1999.

\bibitem{Boyd}
S. Boyd and L. Vandenberghe,
``Convex optimization,"
Cambridge University Press,  2004.

\bibitem{Schrijver}
A. Schrijver,
``Combinatorial optimization: polyhedra and efficiency,"
Springer, 2003.

\bibitem{wadayama2}
T. Wadayama, 
``Interior point decoding for linear vector channels based on convex optimization,'' 
IEEE Trans. Inform. Theory, pp.4905-4921, vol.56, no.10, Oct. (2010)

\bibitem{fukuda}
K. Fukuda,
``cdd and cddplus Homepage, '' 
\verb+http://www.ifor.math.ethz.ch/~fukuda/cdd_home/+

\bibitem{IS}
``The On-Line Encyclopedia of Integer Sequences, A000166'',  
\verb+http://oeis.org/A000166+

\bibitem{Ziegler}
G. M. Ziegler,
``Lectures on Polytopes, ''
Springer-Verlag New York, 1995.


\bibitem{Birkhoff}
G. Birkhoff, ``Three observations on linear algebra,''
Univ. Nac. Tacuman, Rev. Ser. A 5, 147--151, 1946.

\bibitem{vonNeumann}
J. von Neumann,  ``A certain zero-sum two-person game equivalent to an
optimal assignment problem,''  Ann. Math. Studies 28, 5--12, 1953.


\bibitem{Schrijver}
A. Schrijver,
``Combinatorial optimization, polyhedra and efficiency,''
Springer-Verlag Berlin, 2003.



\end{thebibliography}
\end{document}